\newcommand\pgen{p_{\mathrm{gen}}}
\newcommand\pswap{p_{\mathrm{swap}}}
\newcommand\ttrunc{t_{\mathrm{trunc}}}
\newtheorem{theorem}{Theorem}[section]
\newtheorem{lemma}[theorem]{Lemma}
\newtheorem{prop}{Proposition}[section]
\newcommand{\unit}{1\!\!1}
\newcommand{\sumforget}[2]{{\widehat{\sum}_{#1}^{#2}}}
\newcommand{\mean}[1]{E\left[#1\right]}
\newcommand{\meanpartial}[2]{E\left[#1, #2\right]}
\def\stoch{\leq_{\textnormal{st}}}
\def\Tupper{T^{\textnormal{upper}}}
\def\Mupper{M^{\textnormal{upper}}}
\def\werner{w}
\newcommand{\Conv}{%
	  \mathop{\scalebox{3.0}{\raisebox{-0.2ex}{$*$}}
	    }
	    }
\let\oldnl\nl
\newcommand{\nonl}{\renewcommand{\nl}{\let\nl\oldnl}}
\def\genswaponly{\mbox{\sc swap-only}}
\def\ddistillation{\mbox{$d$-{\sc dist-swap}}}
\definecolor{ballblue}{rgb}{0.13, 0.67, 0.8}
\definecolor{brandeisblue}{rgb}{0.0, 0.44, 1.0}
\begin{document}
\title{Efficient computation of the waiting time and fidelity in quantum repeater chains
}
\author{
	Sebastiaan Brand${}^{*\dagger}$,
	Tim Coopmans${}^{*\ddagger}$,
	David Elkouss${}^{\ddagger}$\thanks{${}^*$These authors contributed equally.}
\thanks{
${}^\dagger$
S. Brand was with Leiden Institute of Advanced Computer Science, Leiden University, Leiden, The Netherlands.
	} 
\thanks{${}^\ddagger$
T. Coopmans and D. Elkouss are with QuTech, Delft University of Technology, Delft, The Netherlands.
	}
\thanks{
Some of the ideas in this work were presented earlier in the master thesis of Sebastiaan Brand (6 June 2019), which can be found online on \mbox{\url{https://theses.liacs.nl/pdf/2018-2019-BrandSebastiaan.pdf}}
}
	}

\maketitle

\begin{abstract}
Quantum communication enables a host of applications that cannot be achieved by classical communication means, with provably secure communication as one of the prime examples.
The distance that quantum communication schemes can cover via direct communication is fundamentally limited by losses on the communication channel.
By means of quantum repeaters, the reach of these schemes can be extended and chains of quantum repeaters could in principle cover arbitrarily long distances.
In this work, we provide two efficient algorithms for determining the generation time and fidelity of the first generated entangled pair between the end nodes of a quantum repeater chain.
The runtime of the algorithms increases polynomially with the number of segments of the chain, which improves upon the exponential runtime of existing algorithms.
Our first algorithm is probabilistic and can analyze refined versions of repeater chain protocols which include intermediate entanglement distillation.
Our second algorithm computes the waiting time distribution up to a pre-specified truncation time, has faster runtime than the first one and is moreover exact up to machine precision.
Using our proof-of-principle implementation, we are able to analyze repeater chains of thousands of segments for some parameter regimes.
The algorithms thus serve as useful tools for the analysis of large quantum repeater chain protocols and topologies of the future quantum internet.

\end{abstract}

\IEEEpeerreviewmaketitle

\section{Introduction}
\IEEEPARstart{T}{he} quantum internet is the vision of a global network, running parallel to our current internet, that will enable the transmission of quantum information between arbitrary points on earth \cite{kimble2008quantum,wehner2018quantum}.
The reasons to investigate such a futuristic scenario is that quantum communication enable the implementation
 of applications beyond the reach of their classical counterparts \cite{wehner2018quantum}.
Examples of these tasks range from secure key distribution and communications \cite{bennett1984quantum, ekert1991quantum}, clock synchronization \cite{josza2000quantum}, distributed sensing~\cite{ge2018distributed, zhuang2018distributed}
 and secure delegated quantum computing \cite{childs2005secure} to extending the baseline of telescopes \cite{kellerer2014quantum}.
One of the key elements to enable these applications is the distribution of entanglement between remote parties.
However, the transmission of long-distance long-lived entanglement remains an open experimental challenge.
The main problem to overcome are the losses in the physical transmission medium, typically glass fibre or free space.
Although the impossibility of copying quantum information \cite{wootters1982single} renders signal amplification impossible, it is still possible to reach long distances by means of a chain of intermediate nodes, known as quantum repeaters \cite{briegel1998quantum}, between sender and receiver.
Here, we aim at fully characterizing the behavior of an important class of entanglement distribution protocols over repeater chains as a tool for the analysis of quantum networks.

The key idea behind quantum repeater protocols is to divide the distance separating the two distant parties in a number of segments connected via a quantum repeater.
At these points both losses and errors can be tackled.
A large number of repeater protocols have been proposed \cite{azuma2015all,bernardes2011rate,borregaard2015long,borregaard2016scalable,borregaard2019quantum,brask2008memory,briegel1998quantum,buterakos2017deterministic,collins2007multiplexed,duan2001long,fowler2010surface,guha2015rate,jiang2007fast,jiang2009quantum,munro2010quantum,muralidharan2014ultrafast,muralidharan2016optimal,muralidharan2017overcoming,pant2017rate,santra2018quantum,simon2007quantum,van2009system} and to a large extent they can be classified \cite{munro2015inside,muralidharan2016optimal} depending on whether or not they use error correction codes to handle these issues.
In the absence of coding, losses can be dealt with via heralded entanglement generation and errors via entanglement distillation \cite{duer2007entanglement,gisin1996hidden,bennett1996purification,deutsch1996quantum,bennett1996mixed,campbell2008measurement,rozpkedek2018optimizing,fang2019non}.
Here, we will focus our interest in this type of protocols as their implementation is closer to experimental reach.

Existing analytical work is mostly aimed at estimating the mean waiting time or fidelity (see also \cite{abruzzo2013quantum, guha2015rate, khatri2019practical} for other figures of merit).
Some of this work builds on an approximation of the mean waiting time under the small-probability assumption \cite{jiang2007fast, simon2007quantum, brask2008memory, sangouard2011quantum},
while for a small number of segments or for some protocols it is possible to compute the waiting time probability distribution exactly \cite{bernardes2011rate,khatri2019practical,
shchukin2017waiting,santra2018quantum,vinay2019statistical}.
However, depending on the application different statistics become relevant. For instance, in the presence of decoherence, one is also interested in the variations around the mean. In order to connect two segments via an intermediate repeater, both segments need to produce an entangled pair. When the first pair in one of the segments is ready, it has to wait until the second segment finalizes, and it decoheres while waiting.
In this context, one may need to discard the entanglement after some maximum amount of time \cite{khatri2019practical,rozpkedek2018parameter,rozpedek2018near-term}.
Entanglement is also used as a resource for implementing non-local gates in distributed quantum computers \cite{cirac1999distributed}. In this context, it is relevant to understand the time it takes to generate a pair of the desired quality with probability larger than some threshold, i.e. in the cumulative distribution.
Here, we undertake the problem of fully characterizing the probability distribution of the waiting time and the associated fidelity to the maximally entangled state.

An algorithm to characterize the full waiting time distribution was first obtained in \cite{shchukin2017waiting}
using Markov chain theory.
Its runtime scales with the number of vertices in the Markov chain, which grows exponentially with the number of repeater segments.
In more recent work, Vinay and Kok show how to improve the runtime using results from complex analysis \cite{vinay2019statistical}.
However, this method still remains exponential in the number of repeater segments.
Here, we provide two algorithms for computing the full distribution of the waiting time and fidelity following the same model as in \cite{shchukin2017waiting}.
Both algorithms are polynomial in the number of segments.
Our main tool is the description of the waiting time and fidelity of the first produced end-to-end link as a recursively defined random variable, in line with the recursive structure of the repeater chain protocol.
The first algorithm is a Monte Carlo algorithm which samples from this random variable, whereas our second algorithm is deterministic and computes the waiting time distribution up to a pre-specified truncation time.
The power of the former algorithm lies in its extendibility: it can be used to analyze refined versions of repeater chain protocols which include intermediate entanglement distillation.
The second algorithm is faster and exact: it computes the probability distribution of the waiting time and corresponding fidelity up to a pre-specified truncation point where the only source of error is machine precision.
The speed of our algorithms allows us to analyze repeater chains with more than a thousand segments for some parameter regimes.

The organization of this work is as follows.
In sec.~\ref{sec:preliminaries}, we introduce notation and the family of repeater protocols under study.
Then, in sec.~\ref{sec:random-variable}, we recursively define the waiting time and fidelity of the generation of a single entangled pair between the end nodes as a random variable.
In sec.~\ref{sec:algorithms}, we provide the two algorithms for computing the probability distribution of this random variable.
We show in sec.~\ref{sec:bounds} how to calculate tighter bounds on the mean waiting time than known in previous work.
Numerical results are given in sec.~\ref{sec:numerical-results}.
In Section~\ref{sec:discussion} we discuss the results obtained and provide an outlook for future research.

\section{Preliminaries \label{sec:preliminaries}}
In this section, we elaborate on the repeater chain protocols we study in this work and explain how we model the quantum repeater hardware.

\subsection{Quantum repeater chains \label{sec:bdcz}}
A quantum repeater chain connects two end points via a series of intermediate nodes.
The goal of the two end points is to share an entangled state of two quantum bits or qubits, the unit of quantum information.
In the entire paper, we refer to both the end points as well as the repeaters as `nodes', so that a repeater chain of $N$ segments has $N + 1$ nodes.
We refer to two-qubit entanglement as a `link' and to a link which spans $n$ hops or segments in the repeater chain as an $n$-hop link.
By end-to-end link, we mean a link between the end nodes of the repeater chain.

In the family of repeater chain protocols we study in this work, nodes are able to perform the following three actions: generate fresh entanglement with adjacent nodes, transform short-range entanglement into long-range entanglement by means of entanglement swapping and increase the quality of links through entanglement distillation.
In this section, we first explain each of these actions in more detail and subsequently elaborate on the family of repeater chain protocols.

First, a node can generate fresh entanglement with each of its adjacent nodes.
We also refer to this single-hop entanglement as `elementary links'.

Furthermore, if the middle node $M$ of a two-segment chain with $A$ and $B$ as end nodes shares a link with node $A$ and another link with node $B$, then $M$ can perform an \textit{entanglement swap} or merely `swap', which results in a shared link between $A$ and $B$.
Performing an entanglement swap is identical to performing quantum teleportation on one part of an entangled pair using another entangled pair\cite{bennett1993teleporting}. 
It consists of two parts: first, node $M$ performs a measurement in the Bell basis on the two qubits it holds, which entangles the two qubits held by $A$ and $B$.
Next, node $M$ sends a classical message to $A$ and $B$ in order to notify them of the outcome of this measurement, which determines the precise state of the entangled pair they hold.
In this work, we model the entanglement swap as an operation which succeeds probabilistically: in the case of failure, both involved entangled pairs are lost.

The last action that a node can perform is entanglement distillation, which transforms two or more low-quality links between the same two nodes and produces a single high-quality link \cite{bennett1996purification, deutsch1996quantum}.
In this work, we use entanglement distillation schemes which are probabilistic: the success or failure of a single entanglement distillation step depends on the joint outcomes on the two sides.
These entanglement distillation schemes consist of two steps: first, both nodes perform local operations including a measurement on their parts of the two links, followed by transmitting the measurement outcome to each other.
The combination of measurement outcomes from the two nodes determines whether the distillation step was successful.
In the case of failure, both involved links are lost.

The repeater chain protocols we study in this work are all based on the seminal work of Briegel et al. \cite{briegel1998quantum}, whose scheme was designed for a repeater chain of $N = W^n$ segments with $n\in \{1, 2, \dots\}$.
For simplicity, we assume $W=2$ in this work.
We distinguish between two versions of the protocol.
The first is \mbox{\genswaponly}, where nodes generate elementary entanglement and transform it into end-to-end entanglement by means of entanglement swaps in a particular order explained below.
The next is \mbox{\ddistillation}, which is identical to the \mbox{\genswaponly} version except for the fact that every $n$-hop link is produced $2^d$ times for some integer $d \geq 1$ and then turned into a single high-quality link by performing entanglement distillation multiple times (more details below).

We start by explaining how the \mbox{\genswaponly} protocol works for two segments and subsequently generalize to $2^n$ segments.
On a chain of two segments, the \mbox{\genswaponly} scheme starts with both end nodes generating a single entangled pair with the repeater node (fig.~\ref{fig:bdcz}(a)).
Once a link is generated, the two involved nodes store the state in memory.
As soon as both pairs have been produced, the repeater node performs an entanglement swap on the two qubits it holds, which probabilistically produces a $2$-hop link between the end nodes.
In the case that the entanglement swap did not succeed, both end nodes will be notified of the failure by the heralding message from the repeater node and subsequently each restart generation of the single-hop entanglement.

In the generalization to repeater chains of $2^n$ segments (fig.~\ref{fig:bdcz}(b) and (c)), the \mbox{\genswaponly} scheme starts with the two-segment repeater scheme as explained above on the first and second segment, on the third and fourth, and so on until segments $2^n -1$ and $2^n$.
Approximately half of the intermediate nodes are thus involved in two instances of the scheme; as soon as both instances have finished generating $2$-hop entanglement, the node will perform an entanglement swap to generate a single $4$-hop entangled link.
In case the entanglement swap fails, all nodes under the span of the 4 hops will start to generate single-hop entanglement again as part of the two-segment scheme.
In general, to produce entanglement that spans $2^{\ell}$ hops, the node that is located precisely in the middle of this span will wait for the production of two $2^{\ell - 1}$-hop links and then perform an entanglement swap (see also fig.~\ref{fig:bdcz}(c)).
The failure of this swap requires to regenerate both $2^{\ell - 1}$-hop links.
We refer to $\ell \in \{0, 1, \dots, n\}$ as the `nesting level' of the protocol, such that single-hop entanglement is produced at the base level $\ell = 0$ and the entanglement swap at level $\ell \geq 1$ transforms two $2^{\ell - 1}$-hop links into a single $2^{\ell}$-hop entangled pair.

In addition to the steps of the \mbox{\genswaponly} version as described above, the original proposal of Briegel et al. included entanglement distillation in order to increase the quality of the input links to each entanglement swap.
In this work, we specifically define a version of the repeater protocol, denoted by \mbox{\ddistillation} for some $d \geq 1$, where $d$ rounds of distillation are performed at every nesting level.
That is, instead of a single link, $2^d$ links are generated at every nesting level.
These links are subsequently used as input to a recurrence distillation scheme: our description of this scheme follows the review work by D\"ur and Briegel \cite{duer2007entanglement}.
In the first step of the recurrence protocol, the $2^d$ links are split up in pairs and used as input to entanglement distillation, which produces $2^{d-1}$ entangled pairs of higher quality.
This process is repeated with the remaining links until only a single link is left, which is then used by the repeater node as input link to the entanglement swap.
Rather than waiting for all $2^d$ links to have been generated before performing the first distillation step, the protocol performs the entanglement distillation as soon as two links are available.
The failure of a distillation step requires the two involved nodes to regenerate the links.
For $d=0$, the \mbox{\ddistillation} scheme is identical to \mbox{\genswaponly} since no distillation is performed.

Generating, distilling and swapping entanglement can in general all be probabilistic operations, which makes the total time it takes to distribute a single entangled pair between the end nodes of a repeater chain a random variable.
We use the notation $T_n$ to refer to the waiting time until a single end-to-end link in a $2^n$-segment repeater chain has been produced.
By $F_n$ we refer to the link's fidelity, a measure of the quality of the state (see sec.~\ref{sec:model}).
Every time the quantities $T_n$ and $F_n$ are used in this work we explicitly state whether they correspond to the waiting time of the \mbox{\genswaponly} version or the \mbox{\ddistillation} version for given $d$.
The goal of this work is to find the joint probability distribution of $T_n$ and $F_n$ for both schemes.

\begin{figure*}[h]
	\includegraphics[width=\textwidth]{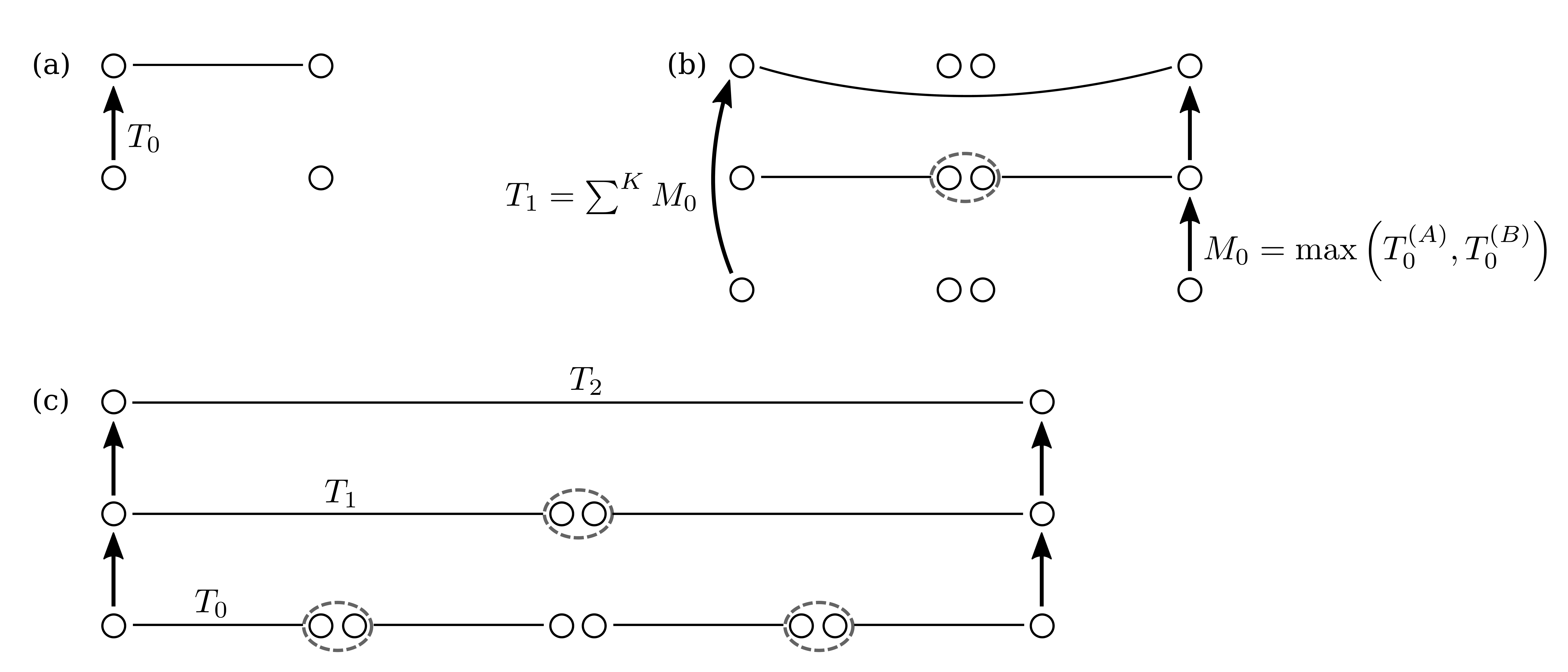}
	\caption{
		The \mbox{\genswaponly} version of the BDCZ protocol \cite{briegel1998quantum} and its completion time $T_n$ as a random variable, where $2^n$ is the number of segments in the repeater chain (see also sec.~\ref{sec:bdcz}).
		(a) For two segments, $T_0$ represents the waiting time for the generation of a single link between two nodes without any intermediate repeater nodes.
		(b) Nested level structure of the protocol over $2^1 = 2$ segments.
		The production of entanglement over two segments first requires the generation of two links, each of which spans a single segment.
		The total time until both links have been generated equals $M_0$, the maximum of their individual generation times $T_0^{(A)}$ and $T_0^{(B)}$, which are independent random variables that are identically distributed (i.i.d.).
		Once the two links have been produced, a probabilistic entanglement swap is performed at both links.
		Failure of the entanglement swap requires the two single-hop links to be regenerated, each of which adds to the total waiting time $T_1$.
		The random variable $K$ corresponds to the number of failing entanglement swaps up to and including the first successful swap.
		In this work we assume that $K$ follows a geometric distribution with parameter $\pswap$ (see sec.~\ref{sec:Tn}).
		(c) A link that spans $2^n$ segments is produced in a nested fashion, where at each nesting level two links are produced and subsequently swapped.
		\label{fig:bdcz}
		}
\end{figure*}

\subsection{Model \label{sec:model}}
In the quantum repeater protocols we study (see sec.~\ref{sec:bdcz}), nodes can generate, store, distill and swap entangled links. We show here how we model each of these four operations.

For the generation of single-hop entanglement between two adjacent nodes, we choose generation schemes which perform heralded attempts of fixed duration $L / c$ where $c$ is the speed of light and $L$ is the distance over which entanglement is generated \cite{munro2015inside}.
In this work we study the topology where all nodes are equally spaced with distance $L=L_0$.

We model entanglement generation to succeed with a fixed probability $0 < \pgen \leq 1$.
For simplicity, we also assume that the success probability $\pgen$ is identical for all pairs of adjacent nodes.
This implies independence between different entanglement generation attempts, i.e. the success or failure of a previous attempt has no influence on future attempts.

The first step of the entanglement swapping, the Bell-state measurement, is modelled as a probabilistic operation with fixed success probability $0 < \pswap \leq 1$ which is identical for all nodes.
This success probability is independent of the state of the qubits that it acts upon.
For simplicity, we assume that the duration of the Bell-state measurement is negligible.
The Bell-state measurement is followed by a classical heralding signal to notify the nodes holding the other sides of the pair whether the Bell-state measurement was successful.
An entanglement swap on two $2^n$-hop links thus takes $2^n \cdot L_0 / c$ time.
Although our algorithms can account for this communication time (see sec.~\ref{sec:comm-time}), we will assume this time to be negligible in most of this work.

The fidelity $F(\rho, \sigma) \in [0, 1]$ between two quantum states on the same number of qubits, represented as density matrices $\rho$ and $\sigma$, is a measure of their closeness, defined as
\[
	F(\rho, \sigma) := \Tr(\sqrt{\sqrt{\rho}\sigma\sqrt{\rho}})^2
	\]
which implies that $F(\rho, \sigma) = 1$ precisely if $\rho = \sigma$.
By Bell-state fidelity, we mean the fidelity between $\sigma$ and \mbox{$\rho = \dyad{\Phi^+}$} where $\ket{\Phi^+} = (\ket{00} + \ket{11}) / \sqrt{2}$ is a Bell state.

We assume that the single-hop entangled states that are generated are two-qubit Werner states parameterized by a single parameter $0 \leq \werner_0 \leq 1$ \cite{werner1989quantum}:
\begin{equation}
\label{eq:werner-states}
	\rho(\werner_0) = \werner_0 \dyad{\Phi^+} + (1 - \werner_0) \frac{\unit_4}{4}	
\end{equation}
where $\unit_4 / 4 = (\dyad{00} + \dyad{01} + \dyad{10} + \dyad{11}) / 4$ is the maximally-mixed state on two qubits.
A straightforward computation shows that the fidelity between $\rho(\werner)$ and the Bell state $\ket{\Phi^+}$ equals
\begin{equation}
\label{eq:fidelity-werner-states}
	F\left(\rho(\werner), \dyad{\Phi^+}\right) = \expval{\rho(\werner)}{\Phi^+} = (1 + 3\werner) / 4
	.
\end{equation}

Quantum states that are stored in the memories decohere over time with the following noise: a Werner state $\rho(\werner)$ residing in memory for a time $\Delta t$ will transform into the Werner state $\rho(\werner_{\textnormal{decayed}})$ with
\begin{equation}
\label{eq:decay}
	\werner_{\textnormal{decayed}} = \werner \cdot e^{-\Delta t / T_{\textnormal{coh}}}
\end{equation}
where $T_{\textnormal{coh}}$ is the joint coherence time of the two quantum memories holding the qubits.
We assume that access to a quantum memory is on-demand, i.e. the quantum states can be stored and retrieved at any time and moreover there is no fidelity penalty associated with such memory access. 

A successful entanglement swap acting on two Werner states $\rho(\werner)$ and $\rho(\werner')$ will produce the Werner state
\begin{equation}
\label{eq:werner-swap}
\rho_{\textnormal{swap}} = \rho(\werner\cdot \werner')
.
\end{equation}
We assume that the Bell-state measurement and the local operations that the entanglement swap consists of are noiseless and instantaneous.

As base for entanglement distillation, we use the BBPSSW-scheme \cite{bennett1996purification}.
We modify it slightly by bringing the output state back into Werner form.
The last step does not change the Bell-state fidelity of the output state.
If two Werner states with parameters $\werner_A$ and $\werner_B$ are used as input to entanglement distillation, both the output Werner parameter $\werner_{\textnormal{dist}}$ and the success probability $p_{\textnormal{dist}}$ depend on the Werner parameters $\werner_A$ and $\werner_B$ of the states it acts upon (see appendix~\ref{app:distillation}):
\begin{eqnarray}
	\label{eq:werner-distillation}
	\werner_{\textnormal{dist}}
	(\werner_A, \werner_B)
	&=&
	\frac{1 + \werner_A + \werner_B + 5 \werner_A \werner_B}{6p_{\textnormal{dist}}} - \frac{1}{3}
	\\
	\label{eq:pdist}
	p_{\textnormal{dist}} 
	(\werner_A, \werner_B)
	&=&
	(1 + \werner_A \werner_B) / 2
	.
\end{eqnarray}
The two nodes involved in distillation on two $2^n$-hop states send their individual measurement outcomes to each other, which takes $2^n \cdot (L_0 /c)$ time but we will assume this time to be negligible for simplicity.
We also assume that the duration of the local operations needed for the distillation is negligible.

\subsection{Notation: random variables \label{sec:notation}}
In this section, we fix notation on random variables and operations on them.

Most random variables in this paper are discrete with (a subset of) the nonnegative integers as domain.
Let $X$ be such a random variable, then its probability distribution function $p_X: x \mapsto \Pr(X = x)$ describes the probability that its outcome will be $x \in \{0, 1, 2, \dots\}$.
Equivalently, $X$ is described by its cumulative distribution function \mbox{$\Pr(X \leq x) = \sum_{y=0}^x \Pr(X = y)$}, which is transformed to the probability distribution function as \mbox{$\Pr(X = x) = \Pr(X \leq x) - \Pr(X \leq x - 1)$}.
Two random variables $X$ and $Y$ are independent if \mbox{$\Pr(X = x \textnormal{ and } Y=y) = \Pr(X=x) \cdot \Pr(Y=y)$} for all $x$ and $y$ in the domain.
By a `copy' of $X$, we mean a fresh random variable which is independent from $X$ and identically distributed (i.i.d.).
We will denote a copy by a superscript in parentheses.
For example, $X^{(1)}, X^{(142)}$ and $X^{(A)} $ are all copies of $X$.

The mean of $X$ is denoted by \mbox{$\mean{X} = \sum_{x=0}^{\infty} \Pr(X=x) \cdot x$} and can equivalently be computed as \mbox{$\mean{X} = \sum_{x=1}^{\infty} \Pr(X \geq x)$}.
If $f$ is a function which takes two nonnegative integers as input, then the random variable $f(X, Y)$ has probability distribution function 
\[
	\Pr(f(X, Y) = z) := \sum_{\substack{x=0, y=0:\\ f(x, y) = z}}^{\infty} \Pr(X = x \textnormal{ and } Y=y)
	.
\]
An example of such a function is addition.
Define $Z := X + Y$ where $X$ and $Y$ are independent, then the probability distribution $p_Z$ of $Z$ is given by the convolution of the distributions $p_X$ and $p_Y$, denoted as \mbox{$p_Z = p_X * p_Y$}, which means \cite{feller1957introduction}
\[
	p_Z (z) = \Pr(Z = z) = \sum_{x=0}^{z} p_X(x) \cdot p_Y(z - x)
	.
\]
The convolution operator $*$ is associative $\left((a * b) * c = a * (b * c)\right)$ and thus writing $a * b * c$ is well-defined, for functions $a, b, c$ from the nonnegative integers to the real numbers.
In general, the probability distribution of sums of independent random variables equals the convolutions of their individual probability distribution functions.

\section{Recursive expressions for the waiting time and fidelity as a random variable \label{sec:random-variable}}
In this section, we derive expressions for the waiting time and fidelity of the first generated end-to-end link in the \mbox{\genswaponly} repeater chain protocol.
First, in sec.~\ref{sec:Tn}, we derive a recursive definition for the random variable $T_n$, which represents the waiting time in a $2^n$-segment repeater chain.
Section~\ref{sec:TnWn} is devoted to extending this definition to the Werner parameter $W_n$ of the pair, which stands in one-to-one correspondence to its fidelity $F_n$ using eq.~\eqref{eq:fidelity-werner-states}:
\begin{equation}
	\label{eq:Fn}
	F_n = \left( 1 + 3 W_n\right) / 4
	.
\end{equation}
In sec.~\ref{sec:comm-time}, we show how to include the communication time after the entanglement swap and in sec.~\ref{sec:random-variable-extended}, we extend the analysis of the waiting time and Werner parameter in the \mbox{\genswaponly} protocol to the \mbox{\ddistillation} scheme.

\subsection{Recursive expression for the waiting time in the \mbox{\genswaponly} protocol \label{sec:Tn}}
In the following, we will derive a recursive expression for the waiting time $T_n$ of a \mbox{\genswaponly} repeater chain of $2^n$ segments (see also fig.~\ref{fig:bdcz}).

Before stating the expression, let us note that all three operations in the repeater chain protocols we study in this work, entanglement generation over a single hop, distillation and swapping, take a duration that is a multiple of $L_0 / c$, the time to send information over a single segment (see sec.~\ref{sec:model} for our assumptions on the duration of operations).
For this reason, it is common to denote the waiting time in discrete units of $L_0 / c$, which is a convention we comply with for $T_n$.

Let us first state the description of $T_n$ before explaining it.

\begin{center}
\fbox{
\begin{minipage}{0.4\textwidth}
	\begin{center}
\textbf{Waiting time in the \mbox{\genswaponly} protocol}\\
	\end{center}
\label{def:Tn}
	We recursively describe the random variable $T_n$ that represents the waiting time until the first end-to-end link in a $2^n$-segment \mbox{\genswaponly} repeater chain is generated, for $n \in \{0, 1, \dots\}$.
The waiting time $T_0$ for generating point-to-point entanglement follows a geometric distribution with parameter $\pgen$.
	At the recursive step, the waiting time is given as the geometric compound sum
	\begin{equation}
	\label{eq:waiting-time}
		T_{n + 1} := \sum_{j=1}^{K_{n}} M_{n}^{(j)}
	\end{equation}
	where $M_{n}$ is an auxiliary random variable given by
	\begin{equation}
	\label{eq:max}
		M_{n} := g_{\textnormal{T}}\left(T_{n}^{(A)}, T_{n}^{(B)}\right)
	\end{equation}
	and the function $g_{\textnormal{T}}$ is defined as
	\begin{equation}
	\label{eq:gT}
	g_{\textnormal{T}} (t_A, t_B) := \max\{t_A, t_B\}
	.
	\end{equation}
	The sum in eq.~\eqref{eq:waiting-time} is taken over the number of entanglement swaps $K_{n}$ until the first success, which is geometrically distributed with parameter $\pswap$ for every $n$.
	See fig.~\ref{fig:bdcz} for a depiction of $T_n$ and $K_n$.
\end{minipage}
}
	\vspace{\baselineskip}
\end{center}

Let us now elaborate on each of the steps in the expression of $T_n$.

We start with the base case $T_0$, the waiting time for the generation of elementary entanglement.
Since we model the generation of single-hop entanglement by attempts which succeed with a fixed probability $\pgen$ (see sec.~\ref{sec:model}), the waiting time $T_0$ is a discrete random variable (in units of $L_0 / c$) which follows a geometric distribution with probability distribution given by $\Pr(T_0 = t) = \pgen (1 - \pgen)^{t - 1}$ for $t \in \{1, 2, 3, \dots\}$.
For what follows, it will be more convenient to specify $T_0$ by its cumulative distribution function
\begin{equation}
\label{eq:T0}
\Pr(T_0 \leq t) = 1 - (1 - \pgen)^t
.
\end{equation}

Let us now assume that we have found an expression for $T_{n}$ and we want to construct $T_{n + 1}$.
In order to perform the entanglement swap to produce a single $2^{n+1}$-hop link, a node needs to wait for the production of two $2^n$-hop links, one on each side.
Denote the waiting time for one of the pairs by $T_n^{(A)}$ and the other by $T_n^{(B)}$, both of which are i.i.d. with $T_n$.
The time until both pairs are available is now given by $M_n := \max(T_n^{(A)}, T_n^{(B)})$ which is distributed according to
\begin{eqnarray}
	\nonumber
	\Pr(M_n \leq t)
	&=&
	\Pr(T_n^{(A)} \leq t \textnormal{ and } T_n^{(B)} \leq t)
	\\
	&=& \Pr(T_n \leq t)^2
\label{eq:M0}
\end{eqnarray}
where the last equality follows from the fact that $T_n^{(A)}, T_n^{(B)}$ and $T_n$ are pairwise i.i.d.
Since we assume that both the duration of the Bell-state measurement and the communication time of the heralding signal after the entanglement swap are negligible (see sec.~\ref{sec:model}), $M_n$ is also the time at which the entanglement swap ends.
We will drop the assumption on negligible communication time in sec.~\ref{sec:comm-time}.

In order to find the relation between $M_n$ and $T_{n+1}$, first note that the number of swaps $K_n$ at level $n$ until the first successful swap follows a geometric distribution with parameter $\pswap$.
This is a direct consequence of our choice to model the success probability $\pswap$ to be independent of the state of the two input links (see sec.~\ref{sec:model}).
Next, recall that the two input links of a failing entanglement swap are lost and need to be regenerated.
The regeneration of fresh entanglement after each failing entanglement swap adds to the waiting time.
Thus, $T_{n + 1}$ is a \textit{compound random variable}: it is the sum of $K_n$ copies of $M_n$.
Since the number of entanglement swaps $K_n$ is geometrically distributed, we say that $T_{n + 1}$ is a \textit{geometric compound sum} of $K_n$ copies of $M_n$.
To be precise, we write
\begin{equation}
\label{eq:T1}
	T_{n + 1} = \sum_{k=1}^{K_n} M_n^{(k)}
\end{equation}
which means that the probability distribution of the waiting time $T_{n + 1}$ is computed as the marginal of the waiting time conditioned on a fixed number of swaps:
\begin{equation*}
\label{eq:T1-pmf}
	\Pr(T_{n + 1} = t)
	= \sum_{k=1}^{\infty} \Pr(K_n=k) \cdot
	\Pr\left[ \left(\sum_{j=1}^{k} M_n^{(j)}\right) = t\right]
\end{equation*}
where the $M_n^{(j)}$ are copies of $M_{n}$.

The waiting time $T_n$ is the same quantity as was studied by Shchukin et al. \cite{shchukin2017waiting}.
Indeed, in sec.~\ref{sec:numerical-results}, we show that our algorithms for computing the probability distribution of $T_n$ recover their numerical results.

\subsection{Joint recursive expression of waiting time and Werner parameter for the \mbox{\genswaponly} protocol \label{sec:TnWn}}

In this section, we extend the expression of the waiting time for the first end-to-end link produced using the \mbox{\genswaponly} protocol with the link's state.
To be precise, we give a recursive expression for the waiting time $T_n$ and Werner parameter $W_n$ of this state, which is well-defined since all states that the \mbox{\genswaponly} repeater chain protocol holds at any time during its execution are Werner states.
The latter statement is a direct consequence of the fact that in our modeling, all operations in the \mbox{\genswaponly} protocol only output Werner states: we choose to model the generated single-hop entanglement as Werner states and furthermore the class of Werner states is invariant under memory errors and entanglement swaps (see sec.~\ref{sec:model}).
The fidelity $F_n$ of the first end-to-end state on $2^n$ segments can be computed from its Werner parameter using eq.~\eqref{eq:Fn}.

We express the waiting time and Werner parameter as a joint random variable $(T_n, W_n)$.
Describing the two as a tuple allows us to capture the fact that the Werner parameter of a link depends on the time it was produced at.
In sec.~\ref{sec:Tn}, we found that the failure of multiple swapping attempts corresponds to the sum of their waiting times.
In order to extend this description to the tuple of waiting time and Werner parameter, we define the \textit{forgetting sum} $\sumforget{}{}$ on sequences of tuples \mbox{$\{(x_j, y_j) | 1 \leq j \leq m\}$} for some \mbox{$m \in \{1, 2, \dots \}$} as
\begin{equation}
\label{eq:forgetting-sum}
	\sumforget{j=1}{m} (x_j, y_j) := \left( \sum_{j=1}^m x_j, y_m\right)
	.
\end{equation}
In analogy to the geometric compound sum from eq.~\eqref{eq:T1}, we define the \textit{geometric compound forgetting sum} \mbox{$(X', Y') := \sumforget{j=1}{K} (X, Y)$}, which formally means
\begin{eqnarray*}
\label{eq:T1-pmf}
	&&\Pr(X' = x \textnormal{ and } Y' = y)
	\\
	&&=\sum_{k=1}^{\infty} p (1 - p)^{k-1} \cdot
	\Pr\left( \sumforget{j=1}{k} (X, Y)^{(j)} = (x, y)\right)
\end{eqnarray*}
where $X$ and $Y$ and their primed version are random variables, and $K$ is a geometrically distributed random variable with parameter $p$.

Making use of the compound forgetting sum, we give the expression for the joint random variable of waiting time $T_n$ and Werner parameter $W_n$.
\begin{center}
\fbox{
\begin{minipage}{0.4\textwidth}
	\begin{center}
\textbf{Waiting time and Werner parameter in the \mbox{\genswaponly} protocol}\\
	\end{center}
	The joint random variable $(T_n, W_n)$ is defined as follows.
	The waiting time $T_0$ is the same as in sec.~\ref{sec:Tn} and \mbox{$\Pr(W_0 = \werner_0) = 1$} where \mbox{$\werner_0\in [0, 1]$} is some pre-specified constant that determines the state of the single-hop entanglement that is produced between adjacent nodes.
	At the recursive step, the waiting time and Werner parameter are given by the geometric compound forgetting sum
	\begin{equation}
		\label{eq:TnWn}
		(T_{n + 1}, W_{n+1}) := \sumforget{k=1}{K_{n}} (M_{n}, V_{n})^{(k)}
	\end{equation}
	where, as in sec.~\ref{sec:Tn}, $K_{n}$ follows a geometric distribution with parameter $\pswap$.
	The auxiliary joint random variable $(M_{n}, V_{n})$ is defined as
	\begin{equation}
		\label{eq:max-TnWn}
		(M_{n}, V_{n}) := g\left((T_{n}, W_{n})^{(A)},  (T_{n}, W_{n})^{(B)}\right)
		.
	\end{equation}
	The function $g$ is given by
	\begin{eqnarray}
		\label{eq:g}
		&&g((t_A, \werner_A), (t_B, \werner_B)) :=\\
		&&\quad (g_{\textnormal{T}}(t_A, t_B), g_{\textnormal{W}}((t_A, \werner_A), (t_B, \werner_B)))
		\nonumber
	\end{eqnarray}
	where $g_{\textnormal{T}}$ is defined in eq.~\ref{eq:gT} and
	\begin{equation}
	\label{eq:gW}
	g_{\textnormal{W}} ((t_A, \werner_A), (t_B, \werner_B)) := \werner_A \cdot \werner_B \cdot e^{-|t_A - t_B|/ T_{\textnormal{coh}}}
	\end{equation}
	with $T_{\textnormal{coh}}$ the quantum memory coherence time as described in sec.~\ref{sec:model}.
\end{minipage}
}
\end{center}

We now explain the above expressions.
For a single segment ($n=0$), the waiting time and Werner parameter are uncorrelated because we model the attempts at generating single-hop entanglement to be independent and to each take equally long (see sec.~\ref{sec:model}).
At the recursive step, an entanglement swap which produces $2^{n + 1}$-hop entanglement requires the generation of two $2^{n}$-hop links.
The expression for the waiting time $T_{n + 1}$ is identical to eq.~\eqref{eq:waiting-time} in sec.~\ref{sec:Tn}.
In order to argue that eq.~\eqref{eq:TnWn} also gives the correct expression for $W_{n + 1}$, we first show that the Werner parameter of the output link of an entanglement swap is given by $V_n$ in eq.~\eqref{eq:max-TnWn}, provided the swap succeeded.
Since $M_n$ as defined in eq.~\eqref{eq:max-TnWn} is identical to its expression in eq.~\eqref{eq:max} in sec.~\ref{sec:Tn}, we only need to argue why $g_{\textnormal{W}}$ in eq.~\eqref{eq:gW} correctly computes the Werner parameter of the output link after an entanglement swap.

In order to do so, denote by $A$ and $B$ the input links to the entanglement swap and denote by $(t_A, \werner_A)$ and $(t_B, \werner_B)$ their respective delivery times and Werner parameters.
Without loss of generality, choose $t_A \geq t_B$, i.e. link $A$ is produced after link $B$.
Link $A$ is produced last, so the entanglement swap will be performed directly after its generation and hence link $A$ will enter the entanglement swap with Werner parameter $w_A$.
Link $B$ is produced earliest and will therefore decohere until production of link $A$.
It follows from eq.~\eqref{eq:decay} that $B$'s Werner parameter immediately before the swap equals
	\begin{equation}
		\label{eq:werner-decayed-before-swap}
	\werner_B' = \werner_B \cdot e^{-|t_A - t_B| / T_{\textnormal{coh}}}
	.
	\end{equation}
Once two links have been delivered, the entanglement swap would produce the $2^{n + 1}$-hop state with Werner parameter
\begin{equation}
	\label{eq:werner-would-be}
\werner_A \cdot \werner_B'
\end{equation}
as in eq.~\eqref{eq:werner-swap}, provided the swap is successful.
Combining eqs.~\eqref{eq:werner-decayed-before-swap} and \eqref{eq:werner-would-be} yields the definition of $g_{\textnormal{W}}$ in eq. \eqref{eq:gW}.

Note that in the definition of $g_{\textnormal{W}}$ in eq.~\eqref{eq:gW} we used the same assumption on the duration of the entanglement swap as in sec.~\ref{sec:Tn}, i.e. that both the Bell-state measurement and the subsequent communication time are negligible (see also sec.~\ref{sec:model}).
This implies that $V_{n}$ in eq.~\eqref{eq:max-TnWn} expresses the Werner parameter of the produced $2^{n + 1}$-hop link in case the swap is successful.
We treat the case of nonzero communication time in sec.~\ref{sec:comm-time}.

The last step in finding the Werner parameter $W_{n+1}$ in eq.~\eqref{eq:TnWn} is to bridge the gap with $(M_n, V_n)$ from eq.~\eqref{eq:max-TnWn}.
If the entanglement swap fails, then the $2^{n + 1}$-hop link with its Werner parameter in eq.~\eqref{eq:werner-would-be} will never be produced since both initial $2^{n}$-hop entangled pairs are lost.
Instead, two fresh $2^{n}$-hop links will be generated.
In order to find how the Werner parameter on level $n+1$ is expressed as a function of the waiting times and Werner parameters at level $n$, consider a sequence $(m_j, v_j)$ of waiting times $m_j$ and Werner parameters $v_j$, where $j$ runs from $1$ to the first successful swap $k$.
The $m_j$ correspond to the waiting time until the end of the entanglement swap that transforms two $2^{n}$-hop links into a single $2^{n + 1}$-hop link and the $v_j$ to the output link's Werner parameter if the swap were successful.
We have found in sec.~\ref{sec:Tn} that the total waiting time is given by $\sum_{j=1}^k m_j$, the sum of the duration of the production of the lost pairs (see eq.~\eqref{eq:waiting-time}).
Note, however, that the Werner parameter of the $2^{n + 1}$-hop link is only influenced by the links that the \textit{successful} entanglement swap acted upon.
Since the entanglement swaps are performed until the first successful one, the output link is the last produced link and therefore its Werner parameter equals $v_k$.
We thus find that the waiting time $t_{\textnormal{final}}$ of the first $2^{n + 1}$-hop link and its Werner parameter $w_{\textnormal{final}}$ are given by the forgetting sum from eq.~\eqref{eq:forgetting-sum}:
\[
	(t_{\textnormal{final}}, w_{\textnormal{final}}) = \left( \sum_{j=1}^k m_j, v_k\right) = \sumforget{j=1}{k} (m_j, v_j)
	.
	\]
Taking into account that the number of swaps $k$ that need to be performed until the first successful one is an instance of the random variable $K_n$, we arrive at the full recursive expression for the waiting time and Werner parameter at level $n +1$ as given in eq.~\eqref{eq:TnWn}.

It is not hard to see that the projection $(T_{n}, W_{n}) \mapsto T_{n}$ recovers the definition of waiting time from \ref{def:Tn}.
Indeed, following the recursive definition of $(T_{n}, W_{n})$ in eqs.~\eqref{eq:TnWn}-\eqref{eq:gW}, the waiting time $T_{n}$ is not affected by the Werner parameters $W_{\ell}$ at lower nesting levels $\ell < n$.

\subsection{Including communication time \label{sec:comm-time}}
While deriving the expressions for waiting time and Werner parameter of the first produced end-to-end link in secs.~\ref{sec:Tn} and \ref{sec:TnWn}, we have explicitly assumed that the total time the entanglement swap takes is negligible.
Here, we include the communication time of the heralding signal from the entanglement swap into the expressions for $M_n$ and $V_n$ (eqs.~\eqref{eq:max} and \eqref{eq:max-TnWn}), which represent the waiting time and Werner parameter directly after the entanglement swap if it were successful.
This communication time equals $2^{n}$ time steps (in units of $L_0 / c$) for a swap that transforms two $2^{n}$-hop links into a single $2^{n+1}$-hop link (see sec.~\ref{sec:model}).
The expressions for $M_n$ and $V_n$ are modified by replacing $g_{\textnormal{T}}$ in eq.~\ref{eq:gT} by
\begin{equation}
\label{eq:gT-with-communication-time}
	g_{\textnormal{T}}^{n} (t_A, t_B) := g_{\textnormal{T}} (t_A, t_B) + 2^{n}
.
\end{equation}
and replacing $g_{\textnormal{W}}$ from eq.~\eqref{eq:gW} by
\begin{eqnarray}
\nonumber
	&&g_{\textnormal{W}}^{n} ((t_A, w_A), (t_B, w_B))
	\\
	&&\qquad := g_{\textnormal{W}} ((t_A, w_A), (t_B, w_B)) \cdot e^{-2^{n} / T_{\textnormal{coh}}}
\label{eq:gW-with-communication-time}
.
\end{eqnarray}
Equation~\eqref{eq:gT-with-communication-time} expresses that the entanglement swap takes $2^{n}$ timesteps longer, while eq.~\eqref{eq:gW-with-communication-time} captures the decoherence of the state during the communication time of the entanglement swap, following eq.~\eqref{eq:decay}.

\subsection{Waiting time and Werner parameter for the \mbox{\ddistillation} protocol \label{sec:random-variable-extended}}
In this section, we sketch how to extend the expression of the waiting time $T_n$ and Werner parameter $W_n$ from secs.~\ref{sec:Tn}-\ref{sec:comm-time} to the case of the \mbox{\ddistillation} repeater protocol presented in sec.~\ref{sec:bdcz}.
Recall that the \mbox{\ddistillation} protocol is identical to the \mbox{\genswaponly} protocol except for the fact that each entanglement swap is performed on the output of a recurrence distillation scheme with $d$ nesting levels.
By a $d'$-distilled $2^n$-hop link we denote a $2^n$-hop link which is the result of successful entanglement distillation on two $(d'-1)$-distilled $2^n$-hop links and by a $0$-distilled $2^n$-hop link we mean a link that is the result of a successful entanglement swap on two $2^n$-hop links.
Thus, every entanglement swap in the \mbox{\ddistillation} protocol is performed on $d$-distilled links only.

Note that at every level of the nested swapping, there are $d$ levels of nested distillation.
To tackle the `double nesting' we modify the waiting time in the \mbox{\genswaponly} protocol by splitting up the tuple of random variables $(T_n, W_n)$ in eq.~\eqref{eq:TnWn}, which represents the waiting time and Werner parameter at level $n$, into \mbox{$d + 1$} tuples of random variables $(T_n^{d'}, W_n^{d'})$ for $d' \in \{0, 1, \dots, d\}$.
The random variable $T_n^{d'}$ corresponds to the waiting time until the end of the first successful distillation attempt on two $d'$-distilled $2^n$-hop links, and $W_n^{d'}$ to the link's Werner parameter.

We first analyze the recurrence distillation protocol at a single swapping nesting level and subsequently tie this analysis in with the nested swapping structure.

If we fix the nesting level $n$, we can straightforwardly apply the analysis of sec.~\ref{sec:TnWn} to the nested distillation.
First, we define $(M_n^{d'}, V_n^{d'})$, which characterizes a link after a single distillation attempt on two $2^n$-hop $d'$-distilled links in case the attempt is successful.
This joint random variable is the analogue of $(M_n, V_n)$ from eq.~\eqref{eq:max-TnWn}, which has the same interpretation but in this case for a swapping attempt.
The analysis resulting in eq.~\eqref{eq:max-TnWn} carries over and yields
\begin{equation}
	\label{eq:Mnd}
	(M_{n}^{d'}, V_{n}^{d'}) := g_{\textnormal{D}} \left((T_{n}^{d'}, W_{n}^{d'})^{(A)},  (T_{n}^{d'}, W_{n}^{d'})^{(B)}\right)
	.
\end{equation}
where $g_{\textnormal{D}}$ is the analogue of $g$ in eq.~\eqref{eq:g} and describes how two input links are transformed into one high-quality link by a successful distillation step:
\begin{eqnarray}
	g_{\textnormal{D}} ((t_A, w_A), (t_B, w_B))
	=
	\nonumber
	\left(g_{\textnormal{T}}(t_A, w_A)
	,
	w
	\right)
	\label{eq:gD}
\end{eqnarray}
where
\[
w: = 
	\begin{cases}
	w_{\textnormal{dist}} \left(w_A \cdot e^{-|t_A - t_B|/T_{\textnormal{coh}}}, w_B\right)
		& \text{ if $t_A \leq t_B$}
	\\
	w_{\textnormal{dist}} \left(w_A, w_B \cdot e^{-|t_A - t_B|/T_{\textnormal{coh}}}\right)
		& \text{ if $t_A > t_B$}
	\end{cases}
	\]
and $w_{\textnormal{dist}}$ is given in eq.~\eqref{eq:werner-distillation}.
The function $g_{\textnormal{D}}$ outputs a tuple of waiting time and Werner parameter of the output state after distillation.
The waiting time requires two links to be generated and is thus given by $g_{\textnormal{T}}$ in eq.~\eqref{eq:gT}.
The Werner parameter equals the Werner parameter of distillation as given by $w_{\textnormal{dist}}$ in eq.~\eqref{eq:werner-distillation} on the two input links, of which the earlier suffered decoherence as given in eq.~\eqref{eq:decay}.

The random variables $(T_n^{d'}, W_n^{d'})$ correspond to the waiting time and Werner parameter after the first successful distillation attempt on two $d'$-distilled $2^n$-hop links, so in line with the analysis leading to eq.~\eqref{eq:TnWn} we obtain
\begin{equation}
	\label{eq:Tnd}
	(T_n^{d'+1}, W_n^{d' + 1}) = \sumforget{j=1}{\mathcal{D}_n^{d'}} \left(M_n^{d'}, V_n^{d'}\right)^{(j)}
	.
\end{equation}
The random variable $\mathcal{D}_n^{d'}$ corresponds to the number of distillation attempts with two $d'$-distilled $2^n$-hop links as input, up to and including the first successful attempt.
It is the analogue of $K_n$ in eq.~\eqref{eq:TnWn}, the number of swap attempts until the first success.

At this point, we have an expression for $(T_n^d, W_n^d)$, the waiting time and Werner parameter of the resulting link after performing a $d$-level recurrence protocol on $0$-distilled input links that each span $2^n$ hops.
Since the recurrence protocol is performed at every swapping nesting level of the \mbox{\ddistillation} protocol, we can insert this expression into our previous analysis using the following two remarks.
First, a $0$-distilled link is the output of an entanglement swap, so $(T_n^0, W_n^0)$ in the \mbox{\ddistillation} scheme takes the role that $(T_n, W_n)$ has in the \mbox{\genswaponly} protocol:

\begin{equation}
	\label{eq:Tn0}
	(T_n^0, W_n^0) = (T_n, W_n)
	.
\end{equation}

Second, since an entanglement swap takes as input two $d$-distilled links, we find that we should replace the definition of $(M_n, V_n)$ in eq.~\eqref{eq:max-TnWn} by
\begin{equation}
	\label{eq:Mn-dist}
	(M_n, V_n) = g\left(\left(T_n^{d}, W_n^{d}\right)^{(A)}, \left(T_n^d, W_n^d\right)^{(B)}\right)
	.
\end{equation}
where $g$ is defined in eq.~\eqref{eq:g}.

We finish this section by remarking that for the \mbox{\ddistillation} protocol, we cannot treat waiting time independently of the Werner parameter of the produced link, as we did for the \mbox{\genswaponly} scheme in sec.~\ref{sec:Tn}.
The reason behind this is the following difference between the nested swaps and the nested distillation: in the former, the success probability $\pswap$ and therefore the number of swaps $K_n$ is independent of the time and state of the produced links, whereas the success probability of entanglement distillation is a function of their states (see eq.~\eqref{eq:pdist}).
Consequently, the summation bound $\mathcal{D}_n^{d'}$ and the Werner parameter $V_n^{d'}$ in the summands $\left(M_n^{d'}, V_n^{d'} \right)$ in eq.~\eqref{eq:Tnd} are correlated.
Therefore, both the waiting time and Werner parameter at any swapping level depend on both waiting time and Werner parameter at the levels below.

\section{Algorithms for computing waiting time and fidelity of the first end-to-end link\label{sec:algorithms}}
In this section, we present two algorithms for determining the probability distribution of the waiting time $T_n$ and average Werner parameter $W_n$ of the first end-to-end link produced by the repeater chain (see sec.~\ref{sec:random-variable}).
The first algorithm is a Monte Carlo algorithm which applies to both families of repeater chain protocols considered in this work: \mbox{\genswaponly} and \mbox{\ddistillation}.
The second algorithm only applies to the \mbox{\genswaponly} protocol and is faster than the first.
We summarize the runtime of the different algorithms presented in this section in table~\ref{table:runtimes}.

\begin{table*}
\begin{center}
	\normalsize
    \begin{tabular}{| l | l | c | c | c |}
    \hline
	    Repeater chain protocol

	    & & Markov-chain-approach & \multicolumn{2}{c|}{Algorithms in this work} \\
	    \qquad\quad (sec.~\ref{sec:bdcz})			 &&\cite{shchukin2017waiting}\cite{vinay2019statistical}& Monte-Carlo & Deterministic
	    \\\hline
	    \multirow{4}{*}{\mbox{\genswaponly}} &
	    Waiting time up to $99\%$ of
	    &&&
	    \\&the cumulative probabilities
	    & $\Theta\left(\textnormal{exp}\left(N\right)\right)$  &  $\mathcal{O}\left(\textnormal{poly}\left(N\right)\right)$   & $\Theta\left(\textnormal{poly}\left(N\right)\right)$
	    \\\cline{2-5}
	    & Waiting time up to
	    &&&\\&fixed truncation time $\ttrunc=1000$
		& $\Theta\left(\textnormal{exp}\left(N\right)\right)$
	    &  $\mathcal{O}\left(\textnormal{poly}\left(N\right)\right)$   & $\Theta\left(\log\left(N\right)\right)$ \\\cline{2-5}
	    &Fidelity  & $\cross$ &  $\mathcal{O}\left(\textnormal{poly}\left(N\right)\right)$   & $\Theta\left(\textnormal{poly}\left(N\right)\right)$
	    \\\hline
	    \mbox{\ddistillation} & Waiting time \& fidelity
	    & $\cross$ & $\mathcal{O}\left(\textnormal{poly}\left(N\right)\right)$ & $\cross$\\\hline
    \end{tabular}
	\caption{
		\label{table:runtimes}
		The time complexity of the algorithms for computing waiting time and fidelity of entanglement distribution through repeater chains as presented in this work compared to existing algorithms.
		The algorithms have exponential (exp) or polynomial (poly) runtime in $N=2^n$, the number of segments in the repeater chain, for $n\in \{1, 2, \dots\}$.
		The Monte Carlo algorithm is a randomized algorithm; its presented runtime is the average runtime.
		The cross ($\cross$) indicates that the algorithm is not present.
		}
\end{center}
\end{table*}

\subsection{First algorithm: Monte Carlo simulation \label{sec:monte-carlo}}
The first algorithm is a randomized function which produces a sample from the probability distribution of the joint random variable $(T_n, W_n)$.
By running the algorithm many times, sufficient statistics can be produced to reconstruct the distribution of the joint random variable up to arbitrary precision (see below for a rigorous statement).
We first outline the algorithm that samples from the waiting time in the \mbox{\genswaponly} protocol following sec.~\ref{sec:Tn}, after which we show how to extend it to track the Werner parameter (sec.~\ref{sec:TnWn}), how to include the communication time after a swap (sec.~\ref{sec:comm-time}) and how to adjust it for the \mbox{\ddistillation} protocol (sec.~\ref{sec:random-variable-extended}).
Pseudocode can be found in algorithm~\ref{alg:monte-carlo}.

We start by explaining the Monte Carlo algorithm for the waiting time in the \mbox{\genswaponly} protocol.
Let $s(X)$ denote a randomized function that yields a sample from the random variable $X$.
We remark that if the cumulative distribution function of $X$ is known, then sampling from $X$ can be done efficiently using inverse transform sampling, which is a standard technique to produce a sample from an arbitrary distribution by evaluating its inverse cumulative distribution function on a sample from the uniform distribution on the interval $[0, 1]$.
We can thus construct the sampler from the waiting time for elementary entanglement, $T_0$, using the inverse of the cumulative distribution function of $T_0$ as given in eq.~\eqref{eq:T0}:
\begin{equation}
	s(T_0) = \lceil \log_{(1-p)}(1 - s(U)) \rceil
	\label{eq:sample-T0}
\end{equation}
	where $U$ is a random variable which is distributed uniformly at random on $[0, 1]$ and $\lceil . \rceil$ denotes the ceiling function.

For sampling from higher levels, we first note that we can easily transform a sampler $s(X)$ into a sampler $s_{\textnormal{sum}} (X, p)$ from a geometric sum $\sum_{j=1}^K X^{(j)}$, where $K$ is geometrically distributed with parameter $p$.
The sampler from the geometric sum probabilistically calls itself:
\begin{align}
	s_{\textnormal{sum}} (X, p) :=
	\begin{cases}
		s(X) & \text{with prob. $p$},\\
		s(X) + s_{\textnormal{sum}} (X, p) & \text{with prob. $1 - p$}.
  	\end{cases} \nonumber
\end{align}
From the recursive expression for the waiting time $T_n$ in sec.~\ref{sec:Tn} it now follows directly that we can construct a sampler from $T_n$ for $n\geq 1$:
\begin{eqnarray*}
	s(T_{n}) &=& s_{\textnormal{sum}} (M_{n}, \pswap)
\end{eqnarray*}
which, per definition of $s_{\textnormal{sum}}$, makes a call to $s(M_n)$ which is given by
\begin{eqnarray*}
	\label{eq:sampler-M-ell}
	s(M_{n}) &=& g_{\textnormal{T}}(s(T_{n-1}),s(T_{n-1}))
\end{eqnarray*}
where $g_{\textnormal{T}}$ is defined in eq.~\eqref{eq:gT}.

Using the Dvoretzky-Kiefer-Wolfowitz inequality \cite{dvoretzky1956asymptotic}, we determine how many samples from $(T_n, W_n)$ we need in order to obtain bounds on its cumulative probabilities.
It follows from this inequality that if $q(t) := \Pr(T_{n} \leq t)$ denotes the cumulative probability function of the waiting time $T_n$ and $q_m(t)$ the empirical cumulative probabilities after having drawn $m$ samples, then the difference between $q$ and $q_m$ is bounded as
\[
	\Pr( |q(t) - q_m(t)| > \epsilon) \leq 2 e^{-2m\epsilon^2}
	\]
for all $t \geq 0$. Thus we can bound the probability that the empirical estimate $q_m(t)$ deviates from $q(t)$ at most $\epsilon$ for any value of $t$ by \mbox{$z = 2 e^{-2m\epsilon^2}$} if the number of samples to draw equals
\begin{equation}
m = -\log(z/2) / (2 \epsilon^2)
\label{eq:number-of-samples}
\end{equation}
Let us emphasize that this number of samples is independent of any parameters of the repeater chain, for instance the number of segments, and thus its contribution to the runtime or space usage of the Monte Carlo algorithm is at most a multiplicative constant, independent of any such parameters.

Following sec.~\ref{sec:TnWn}, we modify the Monte Carlo algorithm to also compute the Werner parameter of the sampled produced entangled pair (for pseudocode see algorithm~\ref{alg:monte-carlo}).
First note that the notation $s(X)$ which samples from a random variable $X$ can also be applied to a joint random variable $(X, Y)$, so that $s((X, Y))$ returns a tuple.
We will now define a sampler $s((T_n, W_n))$ where $(T_n, W_n)$ is the joint random variable representing waiting time and Werner parameter of a $2^n$-segment \mbox{\genswaponly} repeater chain (see sec.~\ref{sec:TnWn}).
For this, we first need to adapt the sampler of the geometric compound sum $s_{\textnormal{sum}}$ to a sampler of the geometric compound forgetting sum (eq.~\eqref{eq:forgetting-sum}) by defining $\hat{s}_{\textnormal{sum}} ((X, Y), p)$ where $X$ and $Y$ are arbitrary random variables and $p\in[0, 1]$ is the parameter of the geometric distribution:
\begin{align}
	\hat{s}_{\textnormal{sum}} ((X, Y), p) :=
	\begin{cases}
		s((X, Y)) \qquad \text{with prob. $p$},\\
		\\
		\pi(s((X,Y))) + \hat{s}_{\textnormal{sum}} ((X, Y), p) \\
		\qquad
		\qquad
		\qquad
		\text{with prob. $1 - p$}.
  	\end{cases} \nonumber
\end{align}
where `+' denotes pairwise addition and $\pi$ is the projector onto the first element of a tuple: $\pi((x,y)) = (x, 0)$ for any numbers $x, y$.

A recursive definition of the joint sampling function from $(T_n, W_n)$ follows directly from the joint expression for waiting time $T_n$ and Werner parameter $W_n$ in
eqs.~\eqref{eq:TnWn}-\eqref{eq:gW}:
\begin{eqnarray}
	\nonumber
	s((T_0, W_0))
	&=& (s(T_0), \werner_0)
	\\
	s((T_{n}, W_{n}))
	&=&
	\hat{s}_{\textnormal{sum}} ((M_{n}, V_{n}), \pswap)
	\label{eq:ssumMnVn}
	\nonumber
	\\
	s ((M_{n}, V_{n}))
	&=&
	g( s(T_{n - 1}, W_{n - 1}), s(T_{n - 1}, W_{n - 1}))
	\label{eq:monte-carlo-g-implementation}
\end{eqnarray}
where $w_0$ is the Werner parameter of each single-hop link at the time it is produced (see sec.~\ref{sec:model}) and the function $g$ is defined in eq.~\eqref{eq:g}.
In this pseudocode for this Monte Carlo algorithm in algorithm~\ref{alg:monte-carlo}, the sampler $s(T_n, W_n)$ is denoted by $\mathrm{sample\_swap}$.

Since the expression for $(T_n, W_n)$ from sec.~\ref{sec:TnWn} assumes that the communication time for the heralding signal after the entanglement swap takes negligible time, it is not included in the Monte Carlo algorithm above.
Fortunately, the adaptation to include this communication time as in sec.~\ref{sec:comm-time} directly carries over to the Monte Carlo algorithm by replacing $g$ in eq.~\eqref{eq:monte-carlo-g-implementation} with
\[
	g^{n}((t_A, w_A), (t_B, w_B)) := (g_{\textnormal{T}}^{n}(t_A, t_B), g_{\textnormal{W}}^{n}((t_A, w_A), (t_B, w_B))
\]
where $g_{\textnormal{T}}^{n}$ and $g_{\textnormal{W}}^{n}$ are defined in eqs.~\eqref{eq:gT-with-communication-time} and \eqref{eq:gW-with-communication-time}.

The time complexity of the Monte Carlo algorithm is a random variable since it is a randomized algorithm.
Every call to $s(T_{n + 1}, W_{n + 1})$ performs the auxiliary function $\hat{s}_{\textnormal{sum}}$ on average $1/\pswap$ times, each of which calls $s(M_{n}, W_{n})$ precisely once and thus $s(T_{n}, W_{n})$ exactly twice by eq.~\eqref{eq:monte-carlo-g-implementation}.
Given access to a constant-time sampler from the uniform distribution on $[0, 1]$, a sample from the base level $s(T_0, W_0)$ can be obtained in constant time, so a simple inductive argument shows that a drawing a single sample from $(T_{n}, W_{n})$ has average runtime $\mathcal{O}\left((2 / \pswap)^{n}\right)$, which equals
\[
	\mathcal{O}\left(
	N^{\log_2 (2 / \pswap)}
	\right)
	\]
which is polynomial in the number of segments $N=2^n$.

Following sec.~\ref{sec:random-variable-extended}, we also adjust the Monte Carlo algorithm to determine the waiting time and average Werner parameter in the \mbox{\ddistillation} repeater chain protocol.
We add a recursive function $\mathrm{sample\_dist}$ in algorithm~\ref{alg:monte-carlo} for sampling from the random variable $T_n^{d'}$ from eq.~\eqref{eq:Tnd}, which represents the waiting time at each level $d'\in \{0, 1, \dots, d\}$ of the nested distillation scheme.
The relation between the random variable tuples $\left(T_n^{d'}, W_n^{d'}\right)$ and $\left(M_n^{d'}, V_n^{d'}\right)$ on the one hand and $\left(T_n, W_n\right)$ and $\left(M_n, V_n\right)$ on the other is mirrored in their implementations $\mathrm{sample\_dist}$ and $\mathrm{sample\_swap}$, respectively: the function $\mathrm{sample\_swap}$ calls $\mathrm{sample\_dist}$ following eq.~\eqref{eq:Mn-dist}, which subsequently calls itself recursively for $d$ nesting levels following eq.~\eqref{eq:Mnd} and eq.~\eqref{eq:Tnd} and calls $\mathrm{sample\_swap}$ at the lowest level in line with eq.~\eqref{eq:Tn0}.
See algorithm~\ref{alg:monte-carlo} for the full pseudocode of the $d$-dependent sampler $\mathrm{sample\_swap}$ for the \mbox{\ddistillation} protocol.

The average runtime of the sampler for the \mbox{\ddistillation} protocol is upper bounded by $\mathcal{O}\left(4^{d} \cdot (2/\pswap)^n\right)$.
In order to derive this, note that the probability that a distillation attempt succeeds (see eq.~\eqref{eq:pdist}) is lower bounded by $1/2$ and hence a call to $\mathrm{sample\_dist}(n, d)$ recursively performs at most $(2 / (1/2))^d = 4^d$ calls to $\mathrm{sample\_swap}(n-1)$ on average.
The average runtime of the full algorithm is the product of this number of calls and the average runtime of the \mbox{\genswaponly} algorithm $\mathcal{O}\left((2/\pswap)^n\right)$ since the recurrence distillation scheme is performed at every swapping level.

Let us finish this section with an analysis of the algorithm's space complexity.
For generating a single sample of $(T_n, W_n)$ of the $\genswaponly$ protocol, the number of variables that need to be stored grows linearly in the number of segments $n$.
To see this, first note that at level $\ell$ the algorithm only needs to keep track of two samples of $(T_{\ell - 1}, W_{\ell - 1})$ at a time, since in the case of a failed swap it may discard the samples after updating the total time used and subsequently reuse the space for storing two fresh samples.
In addition, for producing these two samples, only two samples need to be stored at \textit{every} level $< \ell$.
The insight here is that at each level the required two samples can be drawn \textit{in sequence} rather than in parallel\footnote{Note that in our runtime analysis, we already assumed sequentiality since we showed that the average \textit{number of calls} to $s(T_0, W_0)$ is at most polynomial in the number of segments $n$.}, so that the space needed to draw the first sample can be reused for the second.
Therefore, the algorithm needs to keep track of at most two samples at every level, which implies that the total number of variables it stores is linear in the number of levels and thus in the number of segments $n$.
For the $\ddistillation$ protocol, the scaling is linear in $n \cdot d$ with $d$ the number of distillation steps per nesting level, which can be shown by an analogous argument.

The number of samples that is required to generate a probability distribution histogram with pre-specified precision is independent of the number of segments (see explanation directly below eq.~\eqref{eq:number-of-samples}).
For constructing the histogram, we only need to store the waiting times for which at least a single sample was drawn and hence the number of such waiting times is also independent of the number of segments.
We conclude that reproducing the probability distribution of $(T_n, W_n)$ using the Monte Carlo algorithm will not exceed polynomial space usage in the number of segments $n$.

\begin{algorithm}[ht]
	\LinesNumbered
	\caption{Monte-Carlo algorithm $\mathrm{sample\_swap}(n)$ for producing a single sample of the joint waiting time and Werner parameter $(T_n, W_n)$ for a \mbox{\ddistillation} quantum repeater chain of $2^n$ segments as in sec.~\ref{sec:TnWn}.
		 Setting $d=0$ corresponds to the \mbox{\genswaponly} repeater chain protocol.
		 }
	\label{alg:monte-carlo}
	\SetKwInOut{KwInput}{Input}
    \SetKwInOut{KwOutput}{Output}
    \SetKwComment{Comment}{<start>}{<end>}
	\SetAlgoLined
	\KwInput{Success probabilities $\pgen$ and $\pswap$, Werner parameter of single-hop links $\werner_0$, nesting level $n$, number of distillation rounds at each level $d$.}
	\KwOutput{Single sample from $(T_n, W_n)$.}
	\uIf{n = 0}{
		$u \gets$ uniform random sample from $[0, 1]$ \\
		\Return{$(\lceil \log_{(1-p)}(1 - u) \rceil, \werner_0)$
		\tcp*[f]{(eq.~\eqref{eq:sample-T0})}
		\label{line:monte-carlo-base-level}
		}
	}
	\ElseIf{$n \geq 1$}{
		($t_A, \werner_A) \gets \mathrm{sample\_dist}(n,d)$
		\label{line:mc-first-sample}
		\\
		($t_B, \werner_B) \gets \mathrm{sample\_dist}(n,d)$
		\label{line:mc-second-sample}
		\\
		$t, \werner \gets g((t_A, \werner_A), (t_B, \werner_B))$
		\tcp*[f]{(eq.~\eqref{eq:g})}
		\label{line:twait}
		\\
		$u \gets$ uniform random sample from $[0,1]$ \\
		\uIf{$u \leq \pswap$}{
			\Return{$t, \werner$
			\label{line:mc-successful-swap}
			}
		}
		\Else{
			$t_{\mathrm{retry}}, \werner_{\mathrm{retry}} \gets \mathrm{sample\_swap}(n)$ \\
			\Return{$t + t_{\mathrm{retry}}, \werner_{\mathrm{retry}}$
			\label{line:mc-failed-swap}
			}
		}
	}
	\BlankLine
	\BlankLine
	\SetKwProg{auxiliary}{}
	\auxiliary{}{\textbf{Auxiliary function} $\mathrm{sample\_dist}(n, d):$}\\
	\uIf{d = 0}{
		\Return{$\mathrm{sample\_swap}(n-1)$}
	}
	\Else{
		($t_A, \werner_A) \gets \mathrm{sample\_dist}(n, d-1)$ \\
		($t_B, \werner_B) \gets \mathrm{sample\_dist}(n, d-1)$ \\
		$t, \werner \gets g_{\textnormal{D}} \left((t_A, \werner_A), (t_B, \werner_B)\right)$
		\tcp*[f]{eq.~\eqref{eq:gD}}
		\\
		$u \gets$ uniform random sample from $[0,1]$ \\
		\tcp{Success probability: eq.~\eqref{eq:pdist}}
		\uIf{$u \leq
		p_{\textnormal{dist}} (\werner_A, \werner_B)$
		}{
			\Return{$t, \werner$}
		}
		\Else{
			$t_{\mathrm{retry}}, \werner_{\mathrm{retry}} \gets \mathrm{sample\_dist}(n,d)$ \\
			\Return{$t + t_{\mathrm{retry}}, \werner_{\mathrm{retry}}$}
		}
	}
\end{algorithm}

\begin{algorithm}
	\LinesNumbered
	\caption{Deterministic algorithm for computing the probability distribution of the waiting time $T_n$ of the \mbox{\genswaponly} protocol at nesting level $n$.
	The subroutine $\mathrm{fast\_convolution\_algorithm}$ computes the distribution of the sum of two random variables $A$ and $B$, each represented by an array of size $\ttrunc + 1$ with their probabilities $\Pr(A = t)$ and $\Pr(B = t)$ for $t\in \{0, 1, 2, \dots, \ttrunc\}$.
		 }
	\label{alg:deterministic}
	\SetKwInOut{KwInput}{Input}
    \SetKwInOut{KwOutput}{Output}
    \SetKwComment{Comment}{<start>}{<end>}
	\SetAlgoLined
	\KwInput{Success probs. $\pgen$ and $\pswap$, nesting level $n$}
	\KwOutput{
		Two-dimensional array of size $(n + 1) \times (\ttrunc + 1) $ with entries $\Pr(T_{\ell} = t)$ for $\ell \in \{0, 1, 2, \dots, n\}$ and $t\in \{0, 1, 2, \dots, \ttrunc\}$.
		}
		$C \gets \textnormal{3-dim. array of zeros,}$\\
		$\qquad \textnormal{size $(n + 1) \times (\ttrunc + 1) \times (\ttrunc + 1)$}$\\
		$T \gets \textnormal{2-dim. array of zeros, size $(n + 1) \times (\ttrunc + 1)$}$\\
		$M \gets \textnormal{1-dim. array of zeros, of size $(\ttrunc + 1)$}$\\
		\label{line:M}

		\BlankLine
		\tcp{Base level probs (eq.~\eqref{eq:T0})}
		\For{$t \in \{0, 1, \dots, \ttrunc\}$}{
			$T[0, t] \gets 1 - (1 - \pgen)^t$
		}

		\BlankLine
		\tcp{Probabilities on higher levels}
		\For{$\ell \in \{0, 1, \dots, n - 1\}$}{
			\BlankLine
			\tcp{Maximum of two copies (eq.~\eqref{eq:max-step-computation})}
			\For{$t \in \{1, 2, \dots, \ttrunc\}$}{
				$M[t] \gets T[\ell, t]^2 - T[\ell, t - 1]^2$
			}
			\BlankLine
			\tcp{Conditional probs... (eq.\eqref{eq:convolution})}
			\For{$k \in \{1, 2, \dots, \ttrunc\}$}{
				set column $C[\ell, k]$ to output of $\mathrm{convolve(C[\ell, k-1], k, M)}$
			}
			\tcp{...and the marginals (eq.~\eqref{eq:deterministic-inductive-step})}
			\For{$t \in \{1, 2, \dots, \ttrunc\}$}{
				\For{$k \in \{1, 2, \dots, \ttrunc\}$}{
					$\mathrm{term} \gets \pswap (1 - \pswap)^{k-1}	\cdot C[\ell, k, t]$\\
				add $\mathrm{term}$ to $T[\ell + 1, t]$
				}
			}
			\BlankLine
			\tcp{Convert $T$ to cumulative probs}
			\For{$t \in \{1, 2, \dots, \ttrunc\}$}{
				$T[\ell + 1, t] \gets T[\ell + 1, t] + T[\ell + 1, t -1]$
			}
				\label{line:last-line-of-ell-loop}
		}
		\Return{T}
	\BlankLine
	\BlankLine
	\SetKwProg{auxiliary}{}
	\auxiliary{}{\textbf{Auxiliary function} $\mathrm{convolve}(S, k, M):$}\\
	\uIf{k = 1}{
		\Return{$M$}
	}
	\Else{
		\tcp{Compute convolution of two arrays using Fast Fourier Transforms}
		$\mathrm{array\_with\_sum\_distribution} \gets \mathrm{fast\_convolution\_algorithm}(S, M)$\\
		\Return{$\mathrm{array\_with\_sum\_distribution}$}
	}
\end{algorithm}

\begin{algorithm}
	\SetAlgoLined
	\caption{Extension to algorithm~\ref{alg:deterministic} for computing $W_n$, the average Werner parameter of the end-to-end state produced at time $t \in \{0, 1, \dots, \ttrunc\}$ by a $2^n$-segment repeater chain with the \mbox{\genswaponly} protocol.
	The algorithm is an extension to algorithm~\ref{alg:deterministic} and contains of several parts that should be inserted into that algorithm.
	}
	\label{alg:calculate-fidelity}
	\SetKwInOut{KwInput}{Input}
	\SetKwInOut{KwOutput}{Output}
	\BlankLine
	\text{The following should be inserted directly after line~\ref{line:M}} \\
	\qquad \text{in algorithm~\ref{alg:deterministic}}:\\
	$W \gets$ \text{2-dim. array of zeros, size $(n + 1) \times (\ttrunc + 1)$}\\
	\For{$t \in \{0, 1, 2, \dots, \ttrunc\}$}{
		$W[0, t] \gets w_0$
		}
	\BlankLine
	\text{The following should be inserted directly after line~\ref{line:last-line-of-ell-loop}} \\
	\qquad \text{within the loop over $\ell$ in algorithm~\ref{alg:deterministic}}:\\
	\For{$k \in \{1, 2, \dots,  \ttrunc\}$}{
		\For{$t_A \in \{1, 2, \dots, \ttrunc\}$}{
			\For{$t_B \in \{1, 2, \dots, \ttrunc\}$}{
				\tcp{The average Werner parameter (eq.~\eqref{eq:werner-run})}
					$w_A \gets W[\ell, t_A]$ \\
					$w_B \gets W[\ell, t_B]$ \\
					$t, w \gets g((t_A, w_A), (t_B, w_B))$
					\tcp*[f]{(see eq.~\eqref{eq:g})}
					\\
					\tcp{Add terms to the numerator of eq.~\eqref{eq:Werner-parameter-tracking}...}
					$p \gets \pswap \left( 1- \pswap\right)^{k - 1}$\\
					$p \gets p\cdot T[\ell, t_A] \cdot T[\ell, t_B]$ \\
					\uIf{$k = 1$}{
						\tcp{... in the case of a single swap (eq.~\eqref{eq:prob-single-swap})}
						$W[\ell + 1, t] \gets W[\ell + 1, t] + w \cdot p$\\
						}
					\Else{
						\tcp{... in the case of multiple swaps (eq.~\eqref{eq:prob-multiple-swaps})}
						\For{$t_{\mathrm{fail}} \in \{0,1, 2, \dots, \ttrunc - 1\}$}{
							$t_{\textnormal{deliver}} \gets t + t_{\mathrm{fail}}$\\
							add
							$w \cdot p \cdot C[\ell + 1, k - 1, t_{\mathrm{fail}}]$
							to
							$W[\ell + 1, t_{\textnormal{deliver}}]$
							\\
						}

						}

			}
		}
	}
	\For{$t \in \{1, 2, \dots, \ttrunc\}$}{
		\tcp{Normalize by dividing by the denominator of eq.~\eqref{eq:Werner-parameter-tracking}}
		$W[\ell + 1, t]$ $\gets$ $W[\ell + 1, t] / T[\ell + 1, t]$
	}
\end{algorithm}

\subsection{Second algorithm: deterministic computation \label{sec:deterministic}}

In this section, we present our full second algorithm, which computes the probability distribution of the waiting time and average Werner parameter up to some pre-specified truncation time $\ttrunc$.
The algorithm applies to the \mbox{\genswaponly} repeater protocol.
In what follows, we first show how to compute the probability distribution of the waiting time $T_n$ of the \mbox{\genswaponly} protocol by recursion (see sec.~\ref{sec:Tn}).
After this, we outline how our algorithm performs a modified version of this computation on the finite domain $\{1, 2, \dots, \ttrunc\}$.
We finish the section by extending its computation to include the average Werner parameter (sec.~\ref{sec:TnWn}).

Let us start by showing how to derive the probability distribution of the waiting time $T_n$ in the \mbox{\genswaponly} protocol.
For a single repeater segment ($n=0$), the waiting time follows the geometric distribution as given in eq.~\eqref{eq:T0}.
For nesting levels $\ell \in \{0, 1, 2, \dots, n\}$, the relation between the probability distributions of $M_{\ell}$ and $T_{\ell}$ follows straightforwardly from eq.~\eqref{eq:M0}:
\begin{equation}
\label{eq:max-step-computation}
	\Pr(M_{\ell} = t) = \Pr(T_{\ell} \leq t)^2 - \Pr(T_{\ell} \leq t - 1)^2
	.
\end{equation}
Now we compute the probability distribution of $T_{\ell + 1}|K_{\ell}$, which is the waiting time conditioned on the number of swaps needed that transform $2^{\ell}$-hop entanglement to the final $2^{\ell + 1}$-hop entanglement:
\begin{eqnarray}
	\nonumber
	\Pr\left(T_{\ell + 1} = t | K_{\ell} = k \right)
	&=&
	\Pr\left( \sum_{j=1}^k M_{\ell}^{(j)} = t\right)
	\\
	&=&
	\left[\Conv\limits_{j=1}^k m_{\ell} \right](t)
	\label{eq:convolution}
\end{eqnarray}
where we have denoted $m_{\ell}(t) :=\Pr(M_{\ell} = t)$ and $*$ denotes convolution of functions (see sec.~\ref{sec:notation}).
The marginal probability distribution of $T_{\ell + 1}$ is calculated from the distribution of the conditional random variable $T_{\ell + 1} | K_{\ell}$ as
	\begin{equation}
		\label{eq:deterministic-inductive-step}
		\Pr(T_{\ell + 1} = t)
		=
		\sum_{k=1}^{\infty}
		\pswap (1 - \pswap) ^ {k-1}
		\Pr\left(T_{\ell + 1} = t | K_{\ell} = k \right)
	\end{equation}
where we used the fact that the number of swaps $K_{\ell}$ is geometrically distributed with parameter $\pswap$.

Our algorithm computes the probability distribution of $T_{n}$ by iterating the procedure in the eqs. \eqref{eq:max-step-computation}, \eqref{eq:convolution} and \eqref{eq:deterministic-inductive-step} over $\ell$ from $0$ to $n - 1$ and is outlined in algorithm~\ref{alg:deterministic}.
Its implementation follows naturally from the equations above except for the following remarks.
First, in the algorithm, the sum in eq.~\eqref{eq:deterministic-inductive-step} is truncated at the pre-specified truncation time $\ttrunc$.
That this truncation yields correct probabilities $\Pr(T_{\ell + 1} = t)$ for all $t \in \{0, 1, \dots, \ttrunc\}$ follows from the fact that $\Pr(T_{\ell + 1} = t | K_{\ell} > t) = 0$ since the generation of entanglement over any number of hops takes at least a single time step.
Second, the convolutions in eq.~\eqref{eq:convolution} can be computed iteratively over $k$ by noting that $\Pr(T_{\ell + 1} = t | K_{\ell} = k + 1)$ equals the convolution of $\Pr(T_{\ell + 1} = t| K_{\ell} = k)$ and $m_{\ell} (t)$.
Moreover, for a single convolution we use a well-known algorithm based on Fast Fourier Transforms \cite{cooley1965algorithm} which we denote by $\mathrm{fast\_convolution\_algorithm}$ in algorithm~\ref{alg:deterministic}.
This subroutine computes the convolution of two arrays of size $\ttrunc$ in time $\Theta(\ttrunc \log \ttrunc)$.

The time complexity of the deterministic algorithm~\ref{alg:deterministic} equals $\Theta(n\cdot \ttrunc^2 \log \ttrunc)$: the iteration over a single level is dominated by the $\Theta(\ttrunc^2 \log \ttrunc)$ runtime of the convolutions in eq.~\eqref{eq:convolution} because eqs. \eqref{eq:max-step-computation} and \eqref{eq:deterministic-inductive-step} are performed in linear time in $\ttrunc$ by looping through an array of $\ttrunc$ elements.
In sec.~\ref{sec:truncation-point}, we give an explicit expression for the truncation time $\ttrunc$ which ensures that $\Pr(T_n \leq \ttrunc) \geq 0.99$.
This expression is polynomial in the number of repeater segments, which implies that algorithm~\ref{alg:deterministic} runs in polynomial time in the number of segments also.

We extend our deterministic algorithm to also compute the average Werner parameter $W_n(t) := \mean{W_n | T_n = t}$ of the end-to-end link produced at time $t$ by a $2^n$-segment \mbox{\genswaponly} repeater chain (see sec.~\ref{sec:TnWn}).
The computation of the average Werner parameter at each level from $0$ to $n$ is performed after completion of the computation of the waiting time probabilities at the same level.

Let us explain the algorithm here (see algorithm~\eqref{alg:calculate-fidelity} for pseudocode).
At the base level the fidelity $W_0(t)$ equals the constant Werner parameter $\werner_0$ as in sec.~\ref{sec:TnWn} for all \mbox{$t \in \{1, 2, \dots, \ttrunc\}$}.
At a higher level, the Werner parameter of a link which is delivered at time $t$ is the output of $g_W$ from eq.~\eqref{eq:gW}, averaged over all possible \textit{realizations} of waiting times $T_n$ which yield $T_n = t$.
In order to precisely define what we mean by `realization', note that the waiting time $T_n$ and average Werner parameter $W_n$ as expressed recursively in sec.~\ref{sec:TnWn} are a function of $K_n$ copies of $(T_{n-1}, W_{n-1})$, the waiting time and Werner parameter at one level lower.
Regarding $(T_n, W_n)$ as a function with $K_n$ and all such copies of $(T_{n-1}, W_{n-1})$ as input, we define a `realization' of $(T_n, W_n)$ as its evaluation on particular instances of these copies.

Using the notion of realization, we obtain the Werner parameter of the $2^{\ell}$-hop link at levels \mbox{$\ell \in \{1, 2, \dots, n\}$}, given that it was produced at time $t$:
\begin{eqnarray}
	W_{\ell} (t)
	=
	\frac{
		\sum\limits_{\substack{\textnormal{$r$} : \\ \textnormal{$r$ delivers link at $t$}}} p_{\ell}(r)
	\cdot
	W_{\ell}^{\textnormal{av}}(r)
	}
	{
		\sum\limits_{\substack{\textnormal{$r$} : \\ \textnormal{$r$ delivers link at $t$}}} p_{\ell}(r)
	}
	\label{eq:Werner-parameter-tracking}
\end{eqnarray}
where $r$ is a realization of $(T_{\ell}, W_{\ell})$ and $W_{\ell}^{\textnormal{av}}(r)$ denotes the average Werner parameter of the $2^{\ell}$-hop that realization $r$ delivers with $p_{\ell}(r)$ its probability of occurrence.

In what follows, we will derive expressions for $p_{\ell}(r)$ and $W_{\ell}^{\textnormal{av}}(r)$.
This will give us an explicit expression for $W_{\ell}(t)$ and it is this expression that our algorithm evaluates.
We distinguish between two cases of realizations for computing $p_{\ell}(r)$.
In the first case, only a single swap (i.e. $K_{\ell} = 1$) is needed to produce the $2^{\ell}$-hop entanglement, i.e. the first swap from level $\ell - 1$ to $\ell$ is successful.
The realizations $r$ that belong to this case can be parameterized by the times $t_A$ and $t_B$ at which the two $2^{\ell - 1}$-hop links are generated.
The total probability of occurrence of these realizations, each of which delivers a $2^{\ell}$-hop link at time $g_{\textnormal{T}}(t_A, t_B)$ (see eq.~\eqref{eq:gT}), is given by
\begin{equation}
	\label{eq:prob-single-swap}
	p_{\ell}(r) =
	\Pr(K_{\ell} = 1) \Pr(T_{\ell} = t_A) \Pr(T_{\ell} = t_B)
\end{equation}
and the average Werner parameter of the produced $2^{\ell}$-hop entangled link is
\begin{equation}
\label{eq:werner-run}
	W_{\ell}^{\textnormal{av}} (r) = g_{\textnormal{W}}((t_A, W_{\ell - 1}(t_A)), (t_B, W_{\ell - 1}(t_B)))
\end{equation}
where $g_W$ is given in eq.~\ref{eq:gW}.

In the second case, at least a single entanglement swap to produce $2^{\ell}$-hop entanglement fails.
Note that the average Werner parameter only depends on the states of the two $2^{\ell - 1}$-hops that are produced as input to the \textit{last} swap since the entanglement inputted into the failing swaps is lost.
In the case of multiple swaps we can therefore group together the realizations for which the following four quantities are identical: the waiting times $t_A$ and $t_B$ for the production of the last two $2^{\ell - 1}$-hop links with in addition the number of swaps $k$ and the time $t_{\textnormal{fail}}$ that these failed swaps need.
The total probability of occurrence of such a group of realizations equals the product of four probabilities,
\begin{eqnarray}
	\nonumber
	p_{\ell}(r)
	&=&
	\Pr(K_{\ell} = k)
	\cdot \Pr(T_{\ell} = t_{\textnormal{fail}} | K_{\ell} = k - 1)
	\\
	&&
	\cdot \Pr(T_{\ell} = t_A)
	\cdot \Pr(T_{\ell} = t_B)
	\label{eq:prob-multiple-swaps}
\end{eqnarray}
while the average Werner parameter $W_{\ell}^{\textnormal{av}}(r)$ of the $2^{\ell}$-hop that is produced by each of these realizations is identical to the first case and is given in eq.~\eqref{eq:werner-run}.
Each realization in this group delivers a $2^{\ell}$-hop link at time $t_{\mathrm{fail}} + g_{\textnormal{T}}(t_A, t_B)$ (see eq.~\eqref{eq:gT}).

Our algorithm loops over each group of realizations, evaluates their probabilities of success in eqs.~\eqref{eq:prob-single-swap} and \eqref{eq:prob-multiple-swaps} and their average Werner parameter in eq.~\eqref{eq:werner-run} and subsequently computes $W_{\ell}(t)$ using eq.~\eqref{eq:Werner-parameter-tracking}.
The domain of the time parameters $t_A, t_B$ and $t_{\mathrm{fail}}$ is bounded from above by $\ttrunc$ since no short-range link that is used to produce a long-range link at time $\leq \ttrunc$ can take longer than $\ttrunc$.
Also, the total number of swaps $K_n$ runs up to $\ttrunc$ since it cannot exceed the time at which the end-to-end link is delivered by the same reasoning as the truncation of the sum in eq.~\eqref{eq:deterministic-inductive-step}, i.e. $\Pr\left(T_{n+1} = t | K_n > t\right) = 0$.
The pseudocode of the deterministic algorithm for computing the average Werner parameter can be found in algorithm~\ref{alg:calculate-fidelity}.

The time complexity of the Werner-parameter algorithm can be inferred directly from algorithm~\ref{alg:calculate-fidelity} by the four loops with domain of size $\Theta(\ttrunc)$, which implies that the full time complexity is $\Theta(n \cdot \ttrunc^4)$. This is polynomial in the number of repeater chain segments (see sec.~\ref{sec:truncation-point}).

\subsection{Possible extensions}
In this section, we give examples of possible extensions of the Monte Carlo algorithm and the deterministic algorithm.
First, we provide an example of how the two algorithms can be extended to different quantum state and noise models than the Werner states and depolarizing decoherence noise used in this work.
We also give an example of an extension to a different network topology than a chain.
We finish the section by sketching what is needed to extend the deterministic algorithm to the \ddistillation\ protocol in the future.

An example of applying the algorithms to more general quantum states is to track states that are diagonal in the Bell basis, i.e.\ we assume that the generated single-hop states can be written as
\[
\sum_{j\in\{\pm\}} \sum_{k \in \{\pm\}} p_{j, k} \dyad{\phi_{j, k}}
\]
where \mbox{$\ket{\phi_{+ \pm}} := (\ket{0}\otimes \ket{0} \pm \ket{1}\otimes\ket{1}) / \sqrt{2}$} and \mbox{$\ket{\phi_{- \pm}} := (\ket{0}\otimes\ket{1} \pm \ket{1}\otimes\ket{0}) / \sqrt{2}$} are the four Bell states and the Bell coefficients $p_{j, k}$ are probabilities which sum to 1.
The implementation of the Monte Carlo method would have the Werner parameter $W_n$ replaced by a joint random variable on the four\footnote{In fact, tracking only three of these coefficients already completely characterize a Bell-diagonal state since they sum to one.} Bell coefficients $(p_{++}, p_{+-}, p_{-+}, p_{--})$, while the deterministic algorithm would compute the average over each of these four coefficients individually in a fashion similar to the average of the Werner parameter (eq.~\eqref{eq:Werner-parameter-tracking}).
Successful entanglement swap and distillation operations both map Bell-diagonal states to Bell-diagonal states~\cite{duer2007entanglement} and could thus each be formulated as an operation on the four Bell coefficients.
\\
An example of a different model of memory decoherence noise (currently eq.~\eqref{eq:decay}) is the application of the Pauli operator \mbox{$Z:= \dyad{0} - \dyad{1}$} on one of the two qubits with probability 
\[
q(\Delta t) := \frac{1}{2}\left(1 + e^{-\Delta t / T_{\text{coh}}}\right)
\]
where $\Delta t$ is the time that the state has resided in memory and $T_{\textnormal{coh}}$ is the joint coherence time of the two memories that hold the two qubits.
This probabilistic application of $Z$ acts on the Bell coefficients as
\[
p_{j, k} \mapsto 
q(\Delta t)
\cdot p_{j, k}
+
(1 - q(\Delta t))
\cdot p_{m, \ell}
\]
where $p_{m, \ell}$ is the coefficient belonging to \mbox{$\ket{\phi_{m, \ell}} := \left(\unit_2 \otimes Z\right) \ket{\phi_{j,k}}$} with \mbox{$\unit_2 := \dyad{0} + \dyad{1}$}.
Lastly, the algorithm could be generalized by modelling the swapping and distillation operations as noisy operations by concatenating the perfect operation with a noise map that can be written as operation on the four Bell coefficients.

In addition to more general state and noise models, both algorithms could also be applied to more general network topologies than a chain.
An example is the generation of a Greenberger-Horne-Zeilinger (GHZ) state~\cite{greenberger1989beyond} in a star network, where there is a single central node and each of the other nodes (the leaves) is connected to this single central node only.
All leave nodes start by generating an elementary link with the central node, after which the central node performs a local operation to convert these links into a single GHZ state on all the leave nodes, e.g.\ by a combination of two-qubit controlled-rotation gates and single-qubit measurements~\cite{cirac1999distributed}.
Similar to our model of the swap operation, we could model the local operation that produces the GHZ state as probabilistic, motivated by probabilistic two-qubit operations in linear photonics~\cite{nielsen2004optical}.
In the same spirit as the $\genswaponly$ protocol and our analysis of it in sec.~\ref{sec:random-variable}, the central node waits for all links to have been generated (which corresponds to the maximum of their individual waiting times) while failure of the local operation requires regeneration of the elementary links (which corresponds to the geometric compound sum).
Since both maximums and geometric sums of random variables can be treated by the two algorithms, both could be used to sample the produced state and waiting time in the star network.

In general, the Monte Carlo algorithm could be applied if one can express the random variables describing waiting time and fidelity as a function of random variables whose probability distribution is known.
For the deterministic algorithm, on the other hand, one should find efficiently computable expressions for the probability distributions of these random variables.
These two cases are different: indeed, for the \ddistillation\ protocol, we have found a Monte Carlo algorithm but have been unable to formulate a deterministic algorithm.
We finish this section by elaborating on this difference for the specific case of the \ddistillation\ protocol and sketch an opportunity for extending the deterministic algorithm to it.
For the \genswaponly\ protocol, we were able to design a Monte Carlo sampler of the waiting time and fidelity by expressing the corresponding random variables at level $\ell$ as a function of those at level $\ell - 1$.
In addition, we could express the probability distributions recursively also, which enabled us to formulate the deterministic algorithm.
For the \ddistillation\ protocol, we succeeded in performing the first step, i.e.\ find a recursive expression for the random variables (see sec.~\ref{sec:random-variable-extended}), but have been unable to express the probability distributions in a recursive fashion.
The difficulty lies in the fact that waiting time cannot be analyzed independently of the fidelity of the input states, since the distillation success probability is a function of the produced states contrary to the swapping success probability $\pswap$ (see also last paragraph in sec.~\ref{sec:random-variable-extended} for more explanation).
The resulting two-way dependence between waiting time and fidelity renders eq.~\eqref{eq:deterministic-inductive-step} incorrect if we simply replaced $\pswap$ by the distillation success probability.
For this reason, the key to finding a deterministic algorithm for computing waiting time and fidelity for the \ddistillation\ protocol is to find an efficiently computable expression for the probability function of $(T_{\ell}, W_{\ell})$ as a function of $(M_{\ell - 1}, V_{\ell - 1})$, i.e.\ a correct alternative to eq.~\eqref{eq:deterministic-inductive-step}.

\section{Bounds on the mean waiting time \label{sec:bounds}}
In this section, we first show how to use the deterministic algorithm from section~\ref{sec:deterministic} to obtain bounds on the mean of the waiting time $T_n$, which improve upon a common analytical approximation.
Then we give an explicit expression for the choice of truncation time in the algorithm for which $99\%$ of probability mass of $T_n$ is captured.

\subsection{Numerical mean using the deterministic algorithm \label{sec:numerical-bounds}}
Here, we show how to obtain bounds on the mean of $T_n$ using the deterministic algorithm from section~\ref{sec:deterministic}.
Such bounds are interesting since a common approximation to the mean in the regime of small success probabilities $\pgen$ and $\pswap$, the \textit{3-over-2-formula} \cite{jiang2007fast, simon2007quantum, brask2008memory, sangouard2011quantum}
\begin{equation}
	\label{eq:3-over-2-approximation}
	\mean{T_n} \approx \left(\frac{3}{2\pswap}\right)^n \cdot \frac{1}{\pgen}
	,
\end{equation}
overestimates the waiting time for large success probabilities.
For example, it can be seen in~\cite[fig. 7(a)]{shchukin2017waiting} (reproduced in this work as fig.~\ref{fig:ratio}, top plot) that for $\pgen = \pswap = 1$, the ratio [true mean]/[approximation] of the true mean $\mean{T_n}$ and the approximation in eq.~\eqref{eq:3-over-2-approximation} decreases as a function of the number of segments and equals $0.2$ for a chain of $16$ segments, i.e.\ an overestimation by a factor $\frac{1}{0.2} = 5$.
In fig.~\ref{fig:ratio} (bottom plot), it can be seen that this overestimation grows to more than $\frac{1}{0.05} = 20$ for a chain of $2048$ segments.

The bounds are obtained in two steps.
First, we perform the deterministic algorithm to compute the probability distribution of $T_n$, as described in section~\ref{sec:deterministic}.
Since this probability distribution is only computed by the algorithm on the truncated domain $\{0, 1, \dots, \ttrunc\}$, we cannot calculate the mean of $T_n$.
Instead, we compute its \textit{empirical mean}, which we define for random variable $X$ with the nonnegative integers as domain as
\begin{equation}
\label{eq:computed-mean}
	\meanpartial{X}{\ttrunc} := \sum_{t=1}^{\ttrunc} \Pr(X \geq t)
	.
\end{equation}
Note that the empirical mean reduces to the real mean for $\ttrunc \rightarrow \infty$ (see section~\ref{sec:notation}).

In the second step, we quantify how well the empirical mean of $T_n$ approximates its real mean.
We need two tools for doing so.
As first tool, we introduce the random variable $\Tupper_n$, which is identical to $T_n$ except for the fact that the two links required for the entanglement swap are produced sequentially at every level rather than in parallel.
We proceed analogously to the first step: we perform a modified version of the deterministic algorithm to compute the probability distribution of $\Tupper_n$ and we compute its empirical mean (details and formal definition of $\Tupper_n$ can be found in appendix~\ref{app:bounds}).
In contrast to $T_n$, we are able to compute the real mean of $\Tupper_n$, which equals $E[\Tupper_n] = (2/\pswap)^n \cdot 1/\pgen$ (proof in appendix~\ref{app:bounds}).
The second tool is the following proposition, which states that for $T_n$ the empirical mean converges at least as fast to the real mean with increasing truncation time as for $\Tupper_n$.
\begin{prop}
\label{prop:mean-computation}
	The difference between the real mean and the empirical mean (eq.~\eqref{eq:computed-mean}) of the waiting time is bounded as
\[
	0 \leq \mean{T_n} - \meanpartial{T_n}{\ttrunc} \leq \left(\frac{2}{\pswap}\right)^n \cdot \frac{1}{\pgen} - \meanpartial{\Tupper_n}{\ttrunc}
\]
and the two bounds coincide for $\ttrunc \rightarrow \infty$.
The random variable $\Tupper_n$ is formally defined in appendix~\ref{app:bounds}.
\end{prop}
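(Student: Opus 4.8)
The plan is to rewrite both the quantity being bounded and the upper bound as tails of the same series $\sum_t \Pr(\cdot \geq t)$, and thereby reduce the entire statement to a single stochastic-dominance inequality $T_n \stoch \Tupper_n$. First I would use the identity $\mean{X} - \meanpartial{X}{\ttrunc} = \sum_{t=\ttrunc + 1}^{\infty} \Pr(X \geq t)$, valid for any random variable $X$ supported on the nonnegative integers, which follows by subtracting the definition of $\meanpartial{X}{\ttrunc}$ from the tail-sum formula $\mean{X} = \sum_{t=1}^{\infty} \Pr(X \geq t)$ recalled in sec.~\ref{sec:notation}. Applied to $X = T_n$, this shows the middle quantity is a sum of probabilities and hence nonnegative, which establishes the lower bound $0 \leq \mean{T_n} - \meanpartial{T_n}{\ttrunc}$ immediately.

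For the upper bound I would invoke the closed form $\mean{\Tupper_n} = (2/\pswap)^n \cdot 1/\pgen$ (proved in appendix~\ref{app:bounds}), so that the right-hand side of the proposition equals $\mean{\Tupper_n} - \meanpartial{\Tupper_n}{\ttrunc} = \sum_{t=\ttrunc+1}^{\infty} \Pr(\Tupper_n \geq t)$ by the same identity. The upper bound is then equivalent to the term-by-term inequality $\sum_{t > \ttrunc} \Pr(T_n \geq t) \leq \sum_{t > \ttrunc} \Pr(\Tupper_n \geq t)$, which is implied by the stochastic dominance $T_n \stoch \Tupper_n$, i.e.\ $\Pr(T_n \geq t) \leq \Pr(\Tupper_n \geq t)$ for every $t$. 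Thus the whole proposition reduces to proving this dominance, after which the coincidence of the two bounds as $\ttrunc \to \infty$ is automatic: since $\mean{\Tupper_n}$ is finite we have $\meanpartial{\Tupper_n}{\ttrunc} \to \mean{\Tupper_n}$, so the upper bound vanishes.

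To prove $T_n \stoch \Tupper_n$ I would construct an explicit coupling and argue $T_n \leq \Tupper_n$ almost surely by induction on $n$. At the base level $T_0$ and $\Tupper_0$ share the same geometric distribution with parameter $\pgen$, so the identity coupling gives equality. For the inductive step I would couple the two independent copies so that $T_n^{(A)} \leq \Tupper_n^{(A)}$ and $T_n^{(B)} \leq \Tupper_n^{(B)}$ hold almost surely; because the two required links are produced sequentially in $\Tupper_n$, its auxiliary variable is the sum $\Mupper_n = \Tupper_n^{(A)} + \Tupper_n^{(B)}$, whereas $M_n = \max(T_n^{(A)}, T_n^{(B)})$, and the elementary inequality $\max(a,b) \leq a + b$ for nonnegative $a,b$ yields $M_n \leq \Mupper_n$. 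Feeding both compound sums the same geometrically distributed number of swaps $K_n$ then gives $T_{n+1} = \sum_{j=1}^{K_n} M_n^{(j)} \leq \sum_{j=1}^{K_n} \Mupper_n^{(j)} = \Tupper_{n+1}$ almost surely, closing the induction and hence delivering the dominance.

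I expect the main obstacle to be this coupling/induction step, specifically verifying that stochastic dominance is preserved by the geometric compound sum and matching the argument to the precise definition of $\Tupper_n$ given in appendix~\ref{app:bounds}. The coupling formulation is what sidesteps any direct manipulation of convolutions, since almost-sure domination is trivially preserved under summing a common random number of pairwise-coupled summands; the only subtlety is the legitimacy of using the same $K_n$ on both sides, which holds because $K_n$ is independent of the link waiting times under the \genswaponly\ model (the swap success probability $\pswap$ does not depend on the states, as noted in sec.~\ref{sec:random-variable-extended}).
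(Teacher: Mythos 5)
Your proposal is correct, and its outer skeleton matches the paper's: both reduce the proposition to the tail-sum identity $\mean{X} - \meanpartial{X}{\ttrunc} = \sum_{t > \ttrunc} \Pr(X \geq t)$ (the paper's lemma~\ref{lemma:meanbound}), the closed form $\mean{\Tupper_n} = (2/\pswap)^n/\pgen$ (which the paper derives via Wald's identity in lemma~\ref{lemma:mean-Tupper}; you invoke it as given, which is legitimate since the paper states it as a separate result), and the stochastic dominance $T_n \stoch \Tupper_n$. Where you genuinely diverge is in how the dominance is established. The paper works directly with cumulative distribution functions: it proves a five-part lemma (lemma~\ref{lemma:dominance-general}) showing that stochastic order is preserved when passing from $\max\{X,X'\}$ to $Y+Y'$, under addition of independent variables, under finite sums, and under geometric compound sums, each by explicit manipulation of convolution-type identities; the induction on $n$ then chains parts (a) and (e). You instead build an explicit monotone coupling by induction, using only the pointwise inequality $\max(a,b) \leq a+b$ for nonnegative $a,b$ and termwise comparison of compound sums driven by the same $K_n$. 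Your route is shorter and arguably more transparent --- almost-sure domination is trivially preserved by the operations involved, and you correctly flag the one point needing care, namely that reusing $K_n$ on both sides is a valid coupling because $K_n$ is independent of the waiting times. What the paper's approach buys in exchange is a set of reusable, self-contained preservation lemmas for $\stoch$ stated at the level of distributions, without appealing (even implicitly) to the equivalence between stochastic order and the existence of a monotone coupling. One cosmetic note: the appendix's definition of $\Tupper_{n+1}$ says $K_n$ is geometric with parameter $\pgen$, which is evidently a typo for $\pswap$; your reading agrees with the paper's actual usage in lemma~\ref{lemma:mean-Tupper}.
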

The main tool for proving proposition~\ref{prop:mean-computation} is the fact that $\Tupper_n$ stochastically dominates $T_n$ for every nesting level $n$, which means that $\Pr(T_n \geq t) \leq \Pr(\Tupper_n \geq t)$ for all \mbox{$t \in \{0, 1, 2, \dots\}$}.
We formally prove the proposition and give a more detailed version of the computation of the probability distribution of $\Tupper_n$ in appendix~\ref{app:bounds}.

\subsection{Choosing a truncation time \label{sec:truncation-point} for the deterministic algorithm}
The truncation time that is inputted into the deterministic algorithm determines how much probability mass will be captured by the algorithm.
The captured probability mass can be bounded from above using Markov's inequality:
	\begin{equation}
		\label{eq:markov}
	\Pr(T_n \geq \ttrunc) \leq \mean{T_n} / \ttrunc
	.
	\end{equation}
We upper bound the mean of $T_n$ in eq.~\eqref{eq:markov} by invoking proposition~\ref{prop:mean-computation} with $\ttrunc = 0$.
The latter reduces to \mbox{$\mean{T_n} \leq (2/\pswap)^n \cdot 1/\pgen$} and thus implies
\[
	\Pr(T_n \geq \ttrunc) \leq
	\left(\frac{2}{\pswap}\right)^n \cdot \frac{1}{\pgen \cdot \ttrunc}
	.
	\]
Consequently, setting
\begin{equation}
	\label{eq:ttrunc-bound}
	\ttrunc =
	\left(\frac{2}{\pswap}\right)^n \cdot \frac{1}{\pgen}
	\cdot \frac{1}{1 - 0.99}
\end{equation}
ensures that an end-to-end link will be produced with probability $\Pr(T_n < t) = 99\%$.

\begin{figure}[h!]
	\centering
	\includegraphics[width=\linewidth]{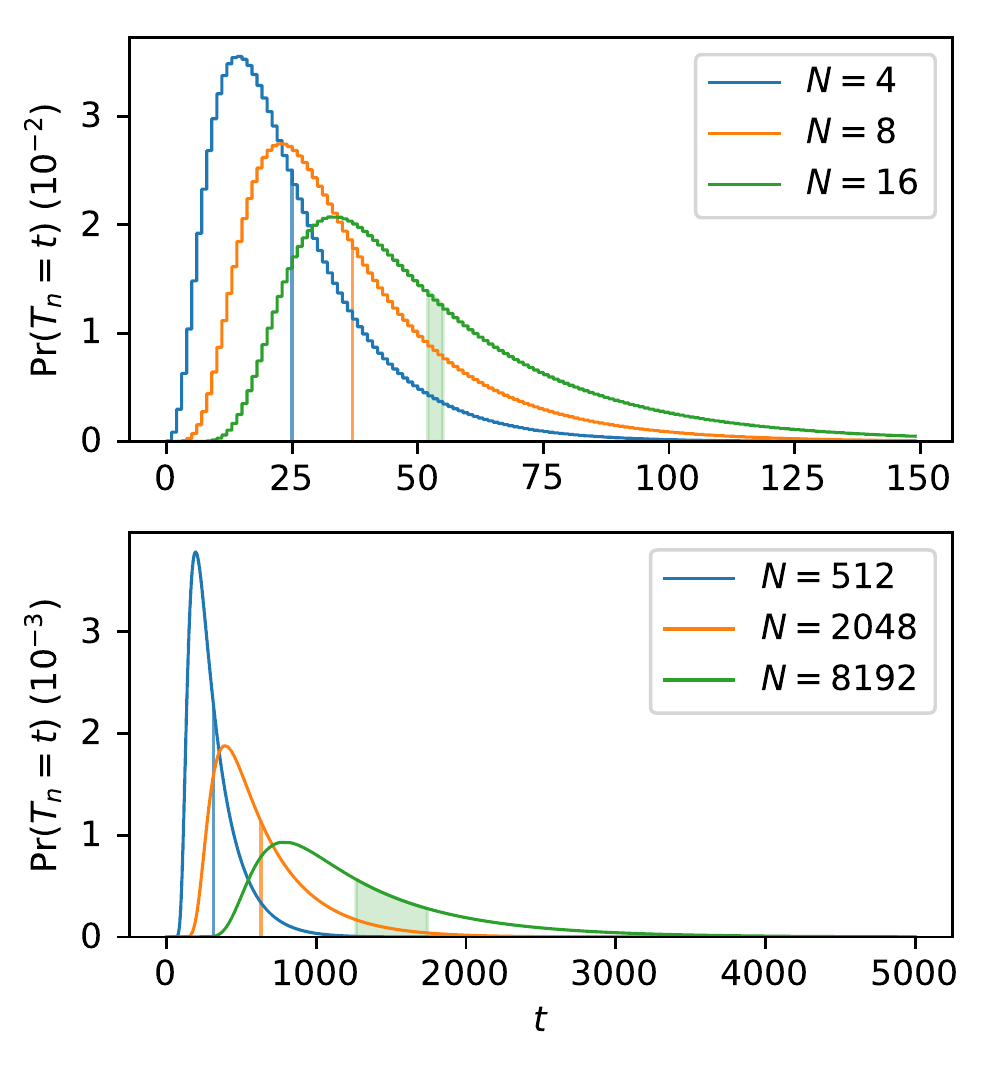}
	\caption{
		Probability distributions of the waiting time and bounds on the mean (vertical shaded areas, see~\ref{sec:numerical-bounds}) calculated by the deterministic algorithm for the \mbox{\genswaponly} protocol.
		The repeater chain parameters are $\pgen=0.1$, $\pswap=0.9$, and the number of repeater segments is given by $N$.
		The top plot recovers the results from Shchukin et al. \cite[Fig.10(a)]{shchukin2017waiting}.
		Computation time $\approx$ 5 seconds for $N=8192$.
		}
	\label{fig:high_prob_swap_inc_shchukin}
	\label{fig:high_prob_swap_pmfs}
\end{figure}

\begin{figure*}[h!]
	\centering
	\includegraphics[width=0.9\textwidth]{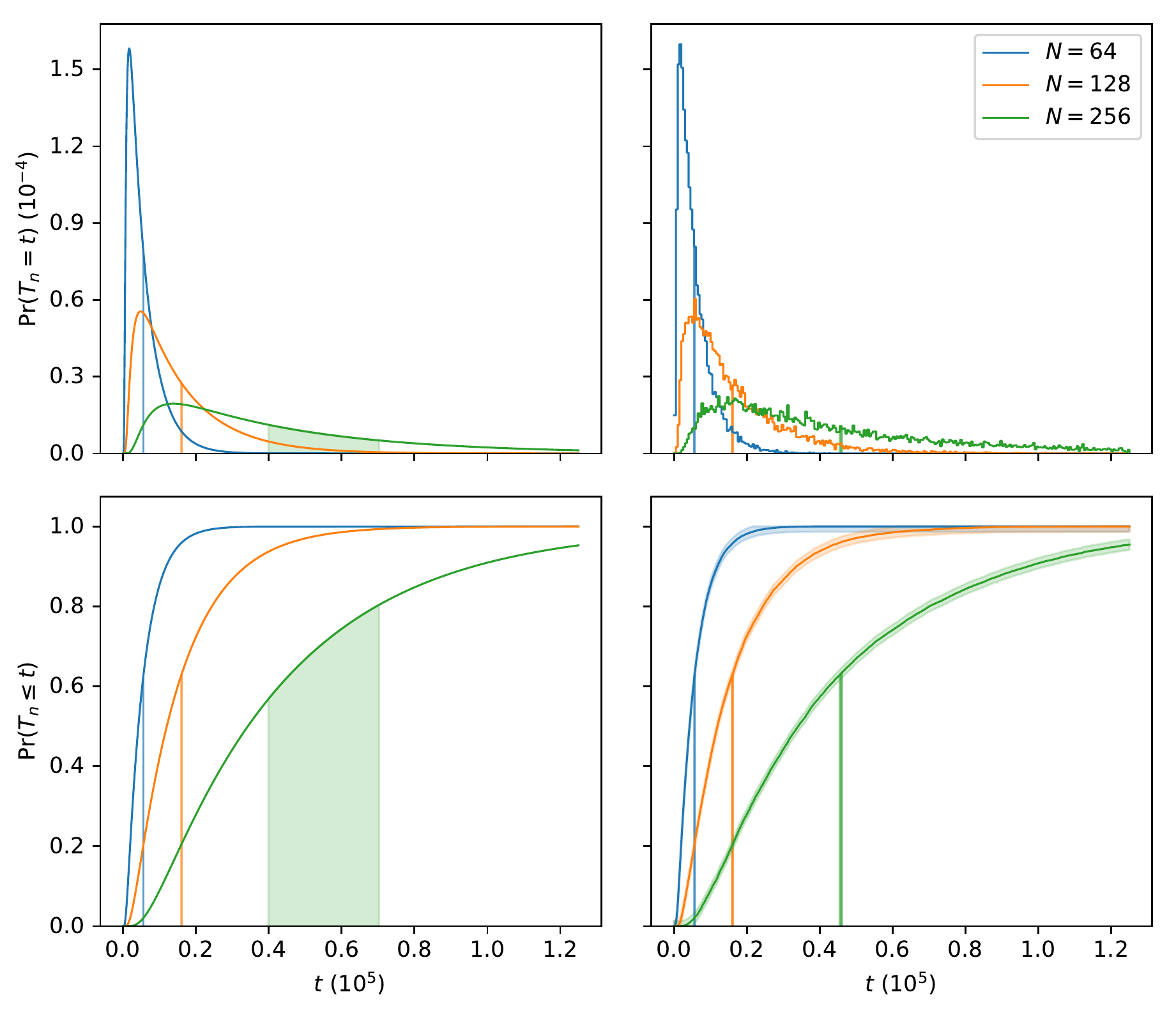}
	\caption{
		The probability distributions of waiting time for repeater chains with varying number of segments $N$ for the \mbox{\genswaponly} protocol, computed using the deterministic algorithm (left plots) and generated from 15,000 samples from the Monte Carlo simulation (right plots). 
		Vertical shaded areas indicate bounds on the mean in the left plots (see sec.~\ref{sec:numerical-bounds}) and the sample mean $\pm$ one standard error in the right plot.
		In the bottom right plot the shaded bands are confidence bands for a probability $z=0.01$ that the actual distribution lies outside of these bands, obtained from the Dvoretzky-Kiefer-Wolfowitz inequality \cite{dvoretzky1956asymptotic} discussed in sec.~\ref{sec:monte-carlo}.
		The repeater chain parameters are $\pgen=0.1$, $\pswap=0.5$.
		The computation time for the deterministic algorithm $\approx$ 9 minutes, while for the Monte Carlo algorithm $\approx$ 30 minutes.
		}
	\label{fig:plots-deterministic-vs-monte-carlo}
\end{figure*} 

\section{Numerical results \label{sec:numerical-results}}
In this section we investigate different repeater chain protocols with the help of our two algorithms.
We start with the \mbox{\genswaponly} protocol, first considering the waiting time distribution of the first produced end-to-end link and subsequently also its average fidelity.
We also show how fidelity and waiting time are affected by the \mbox{\ddistillation} protocol.
Finally we consider the effect of including the communication time in swap operations.

Our proof-of-principle implementation can be found in \cite{githubrepo}.
The reported computation times have been obtained from single-threaded computations on commodity hardware (specifically: a single logical processor of an Intel i7-4770K CPU @ 3.85 GHz).
In the plot captions in this section, we state the computation time for the largest number of repeater segments because computing the distribution of $(T_n, W_n)$ requires finding the distribution of $(T_{n-1}, W_{n-1})$ first (see sec.~\ref{sec:random-variable}).

First we consider the waiting time in the \mbox{\genswaponly} protocol.
Our algorithms are able to recover the results from Shchukin et al. \cite{shchukin2017waiting}, both the full distribution of waiting time exactly (fig.~\ref{fig:high_prob_swap_inc_shchukin}, top plot) as well as its mean up to arbitrary precision (fig.~\ref{fig:ratio}, top plot), and extend these results from 16 to 8192 and to 2048 repeater segments, respectively (figs.~\ref{fig:high_prob_swap_pmfs}, \ref{fig:plots-deterministic-vs-monte-carlo} and \ref{fig:ratio}).
In fig.~\ref{fig:plots-deterministic-vs-monte-carlo} we compare results from both our Monte Carlo and deterministic algorithm and find that there is good agreement between the two.
For high swapping success probability $\pswap$ the deterministic algorithm can compute probability distributions up to thousands of nodes, as illustrated in fig.~\ref{fig:high_prob_swap_pmfs}.
For small $\pswap$, we have found that the number of repeater segments $N = 2^n$ that we can simulate is limited in practice.
This is a consequence of the fact that $\ttrunc$ grows fast in $N$ for small $\pswap$ if we want the guarantee that $99\%$ of the probability distribution is captured (see eq.~\eqref{eq:ttrunc-bound}), and the polynomial scaling in $\ttrunc$ of the algorithm's runtime.

Secondly, we consider the average fidelity of the \mbox{\genswaponly} and \mbox{\ddistillation} protocols.
We investigate the \mbox{\genswaponly} protocol with a small number of segments ($N=1,2,4$), see fig.~\ref{fig:fidelity-plots}.
We observe that fidelity stabilizes as the waiting time increases, and it stabilizes at values for which the state remains entangled in spite of the absence of distillation.
Again, the deterministic and Monte Carlo algorithms show good agreement.
Adding the calculation of fidelity increases the time complexity of the deterministic algorithm, which reduced the maximum number of segments that we could simulate.
We found that the Monte Carlo algorithm is able to simulate a larger number of segments, as its computational complexity is unchanged when also tracking the fidelity.

Third, we consider the \mbox{\ddistillation} protocol.
In fig.~\ref{fig:distillation-and-memory-lifetime}, we study the effects of distillation in a repeater chain of 4 segments comparing one distillation round ($d=1$) against no distillation rounds ($d=0$) for two different memory coherence times.
We first observe the increase in the waiting times caused by the generation of the additional links necessary for distillation.
An increase in waiting time is accompanied by an increase in memory decoherence, which implies that the degree to which distillation is beneficial depends on the memory coherence time.
The values for the coherence time we chose allow to show both types of behavior.

\begin{figure}[h!]
	\centering
	\includegraphics[width=\linewidth]{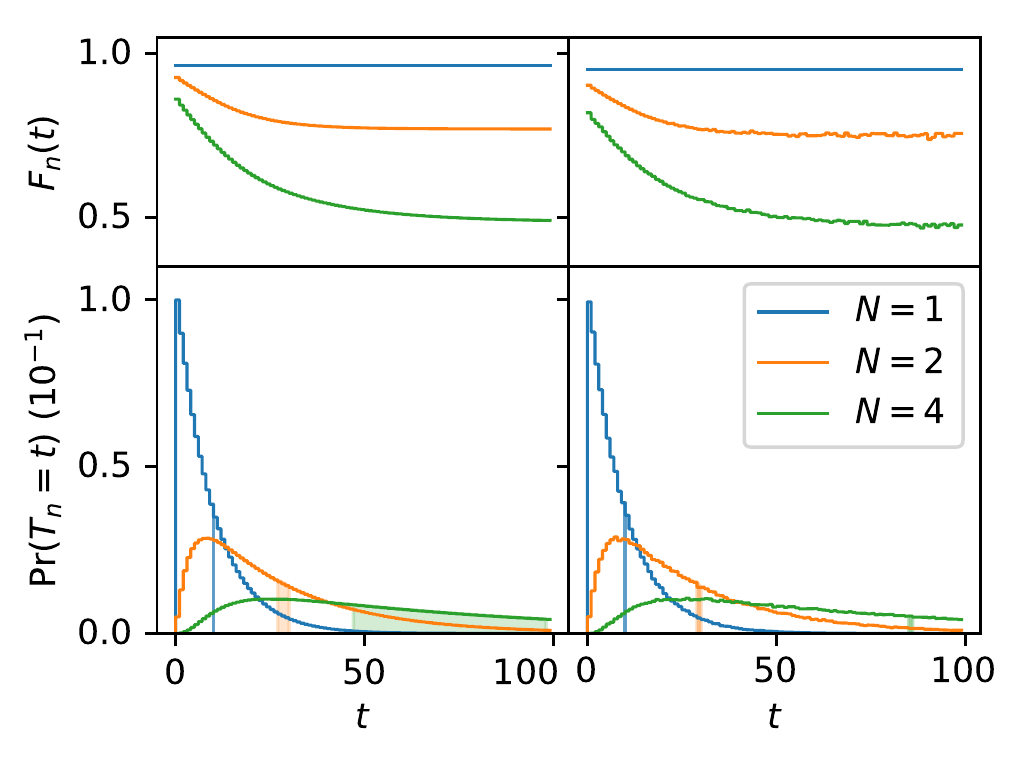}
	\caption{The average fidelity of links delivered at time $t$ by an $N$-segment \mbox{\genswaponly} repeater chain (top row), and the corresponding probability distributions (bottom row), from both the deterministic algorithm (right column) and the Monte Carlo algorithm (left column) using 250,000 samples.
	For the deterministic figures the vertical shaded areas indicate numerical bounds on the mean (see sec.~\ref{sec:numerical-bounds}), and for the Monte Carlo figures these indicate the sample mean $\pm$ one standard error.
	The repeater chain parameters for \mbox{\genswaponly} protocol are $\pgen=0.1$, $\pswap=0.5$, $T_{\textnormal{coh}} = 50$ time steps and the fidelity of the elementary links equals $F_0 = 0.95$, which corresponds to Werner parameter $w_0 = (4\cdot 0.95 - 1)/3 \approx 0.93$ following eq.~\eqref{eq:fidelity-werner-states}.
	The computation time for the deterministic algorithm $\approx$ 15 minutes, while for the Monte Carlo algorithm $\approx$ 20 seconds.
	}
	\label{fig:fidelity-plots}
\end{figure}

\begin{figure}[h!]
	\centering
	\includegraphics[width=\linewidth]{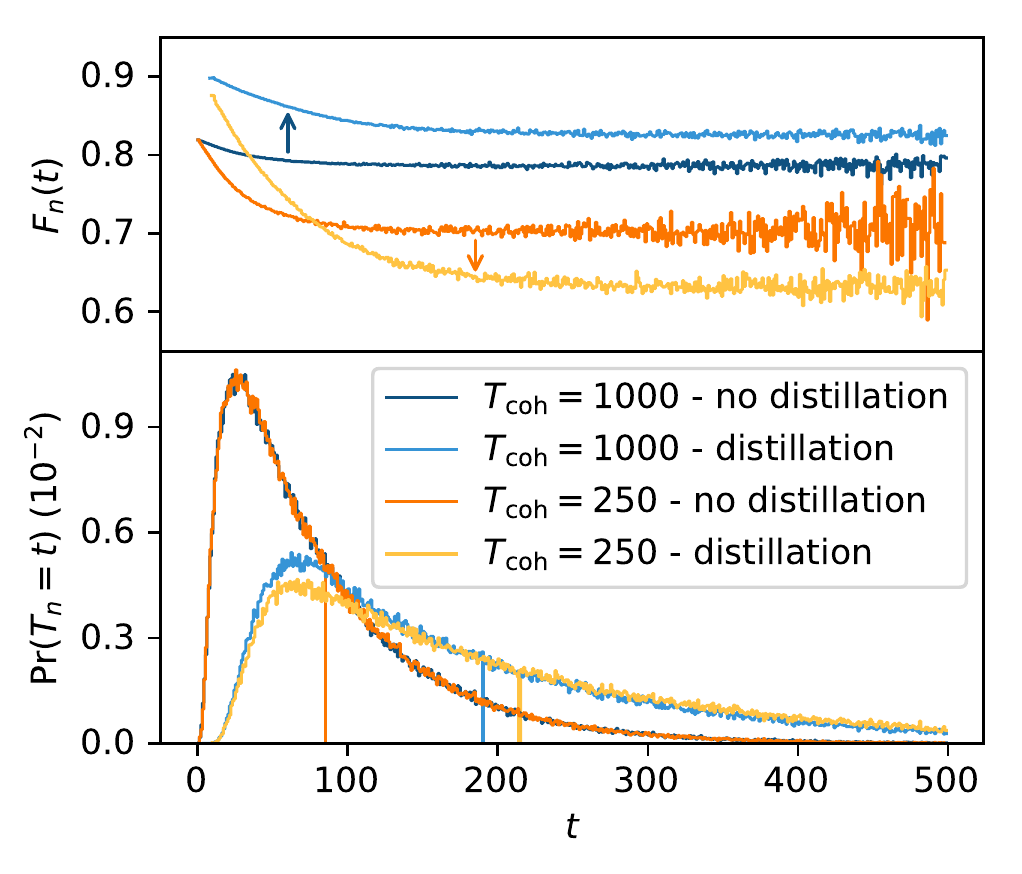}
	\caption{
		Comparison between 
		the \mbox{\ddistillation} protocol with a single distillation round on every level ($d=1$) and the \mbox{\genswaponly} protocol (no distillation) for a repeater chain with $N=4$ segments, for longer and shorter memory coherence times $T_{\text{coh}}$.
		While the goal of distillation is to improve the fidelity of delivered links, when the coherence time is too short compared to the time needed to deliver a link, adding distillation actually decreases the fidelity (orange arrow). For longer coherence times adding distillation does improve the fidelity (blue arrow).
	In both cases the waiting time increases because entanglement distillation requires more links to be generated.
	For the \mbox{\genswaponly} protocol, the waiting time is independent of the memory coherence time (in contrast to fidelity), which can be observed from the identical waiting times in the bottom plot.
	Each curve has been generated from 250,000 Monte Carlo algorithm samples.
	Computation time $\approx$ 15 seconds without distillation, $\approx$ 100 seconds with distillation.
	}
	\label{fig:distillation-and-memory-lifetime}
\end{figure}

Finally, we incorporate the communication time for the entanglement swap into our model following sec.~\ref{sec:monte-carlo}.
Fig.~\ref{fig:swap_comm_time} shows how the output probability distributions change when we include this communication time.
We confirm that, as stated by Brask and S{\o}rensen \cite{brask2008memory}, omitting this communication time gives a good approximation for small $\pgen$, but not for larger $\pgen$.

\begin{figure}[h!]
	\centering
	\includegraphics[width=\linewidth]{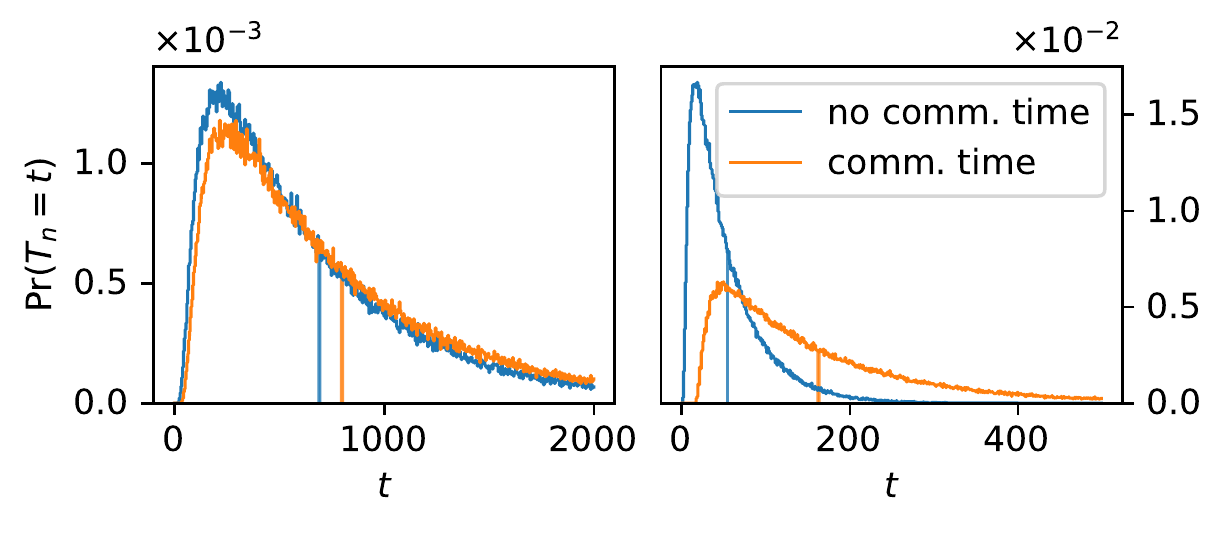}
	\caption{Waiting time distributions with and without communication time for entanglement swapping for the \mbox{\genswaponly} protocol, with entanglement generation success probabilities $\pgen=0.1$ (left) and $\pgen=0.9$ (right), generated from 250,000 samples from the Monte Carlo algorithm. The vertical bars indicate the mean $\pm$ one standard error. As stated by Brask and S{\o}rensen \cite{brask2008memory} omitting this communication time gives a good approximation when $\pgen$ is small.
		The repeater chain has $N=16$ segments and $\pswap=0.5$. Computation time $\approx$ 5 minutes per curve.}
	\label{fig:swap_comm_time}
\end{figure}
\begin{figure}[h!]
\centering
\includegraphics[width=\linewidth]{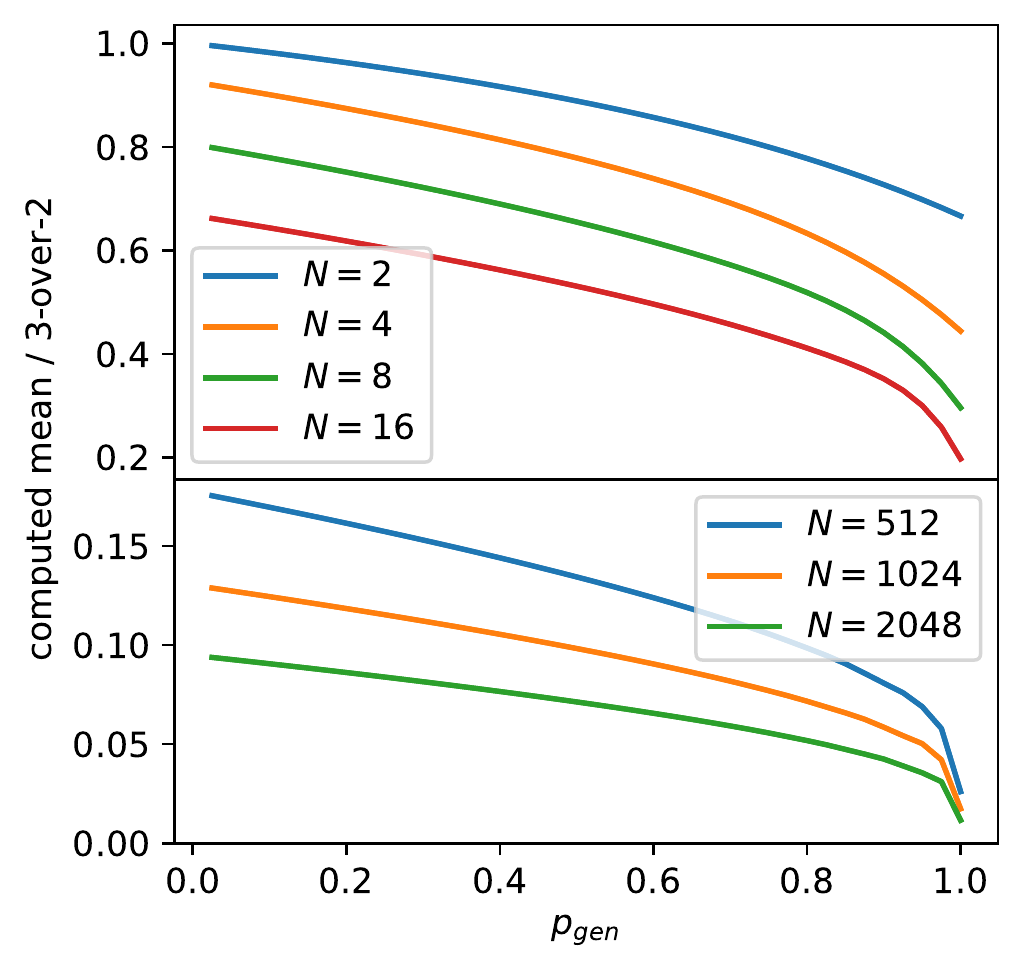}
\caption{\label{fig:ratio}
Ratio between the bound on the mean computed by the deterministic algorithm and the 3-over-2 approximation eq.~\eqref{eq:3-over-2-approximation} as a function of entanglement generation success probability $\pgen$ with deterministic swapping ($\pswap = 1$).
For each number of segments $N$, the figures show two lines: one for the upper and lower bound on the mean (see sec.~\ref{sec:numerical-bounds}).
The fact that for each $N$ only a single thick line rather than two lines can be seen indicates that the bounds on the mean almost coincide. 
The top figure recovers work by Shchukin et al.~\cite[fig. 7(a)]{shchukin2017waiting}, whose exponential-time algorithm based on Markov chains is able to compute the mean exactly while our algorithms can get arbitrarily tight bounds on the mean (deterministic algorithm) or approximate the mean with arbitrary precision (Monte Carlo algorithm) at the benefit of polynomial runtime.
The runtime improvement over the Markov-chain approach allows us to extend the results of Shchukin et al. to more than 2000 segments (bottom figure).
Each curve was generated by running the deterministic algorithm for 40 different values of $\pgen$ ($0.025, 0.05, \dots, 1$) and the truncation time was set to $\ttrunc = 1000$. 
Computation time for each curve $\lesssim$ 2 seconds.
}
\end{figure}

In order to get a rough analytical understanding of the probability distributions for the waiting time that our algorithms have computed, we fitted a generalized extreme value (GEV) distribution to them, which has cumulative distribution function
\begin{equation}
\label{eq:gev-distribution}
\Pr(X \leq t) = \exp(- (1 + \xi s)^{-1/\xi})
\end{equation}
where $X$ is a random variable following the GEV distribution, $s = (t - \mu) / \sigma$, and $\xi > 0$, $\sigma > 0$ and $\mu \in \mathbb{R}$ are the free parameters~\cite{charras2013extreme}.
Fig.~\ref{fig:fit} shows a typical result of such a fit.
We find that the fit seems rather close to the computed distribution, although the difference in the means indicates that the fit should only be used to make approximate statements.

A good fit could provide a speedup for the deterministic algorithm, since the algorithm computes the distribution at each level from the distribution at the previous level.
To be precise, the algorithm's runtime can be reduced by starting the computation at the fitted distribution instead of computing the distribution at level $n$, and subsequently using this distribution to have the algorithm compute the distribution at level $n+1$.
In order to ensure that the distribution at the final level $> n$ still approximates the real distribution, careful analysis of the acquired error of the distribution at higher levels is required.
We leave such error accumulation analysis, based on a distribution that forms a phenomenological fit, for future work.

\begin{figure}[h!]
\centering
\includegraphics[width=\linewidth]{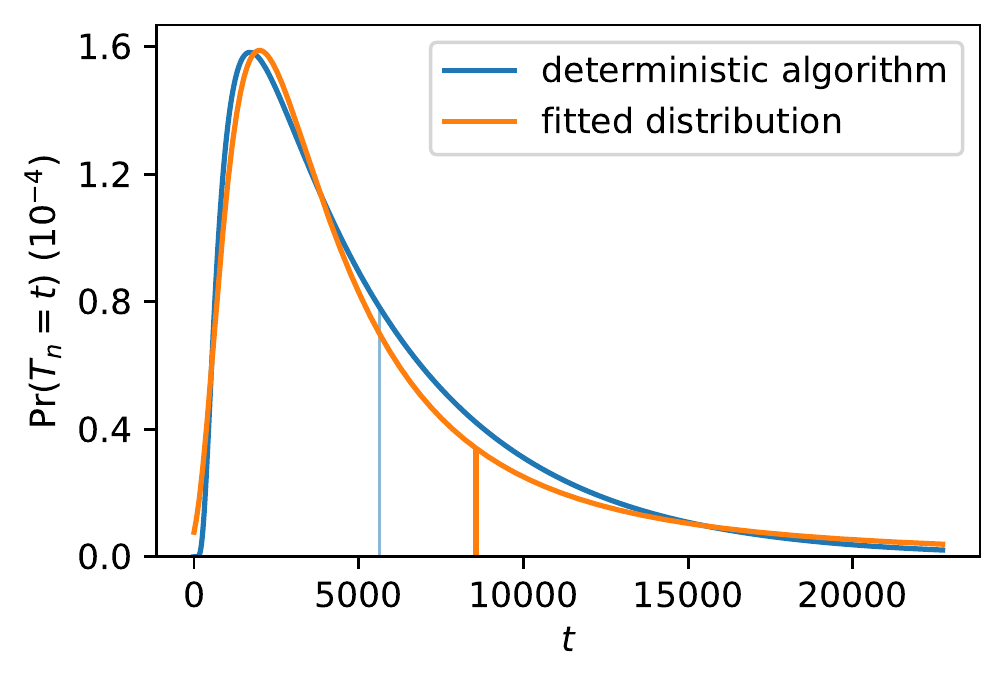}
\caption{\label{fig:fit}
Waiting time distribution (64 segments, $\pgen=0.1$, $\pswap=0.5$), computed with the deterministic algorithm (sec.~\ref{sec:deterministic}), and a fit to the same distribution using the generalized extreme value (GEV) distribution (see eq.~\eqref{eq:gev-distribution}; fitting parameters: $\xi\approx -0.5997, \mu\approx 3092.7, \sigma\approx 2694.9$).
Although the two distributions seem to coincide fairly well by eye, the difference between the means (vertical lines) is relatively large (a factor $\approx 1.5$).
For different number of segments and success probabilities $\pgen$ and $\pswap$, the difference between fitted and computed distribution is similar.
}
\end{figure}

\section{Discussion \label{sec:discussion}}
Quantum networks enable the implementation of communication tasks with qualitative advantages with respect to classical networks.
A key ingredient is the delivery of entanglement between the relevant parties.
In this work, we provide two algorithms for computing the probability that an entangled pair is produced by a quantum repeater chain at any given time and also show how to compute the pair's Bell-state fidelity.
The first algorithm is a probabilistic Monte Carlo algorithm whose precision can be rigorously estimated using standard techniques.
The second one is deterministic and exact up to a chosen truncation time.

Both algorithms run in time polynomial in the number of nodes, which is faster than the exponential runtime of previous algorithms.
The workhorse behind the improved complexity is a formal recursive definition of the waiting time and the state produced by the chain.
We developed an open source proof-of-principle implementation in \cite{githubrepo}, which allows to analyze repeater chains with several thousands of segments for some parameter regimes.

The deterministic algorithm is the fastest of the two for a large set of success probabilities for generating single-hop entanglement $\pgen$ and entanglement swapping $\pswap$.
The Monte Carlo algorithm could be sped up to a factor proportional to the number of samples by parallelization.
A second option to reduce the runtime would be to construct estimates of the random variables at each level of a repeater chain and sample from the estimates to estimate the following level.
A careful analysis would be necessary to ensure that the speed up does not vanish when taking the accumulated precision error into account.

We have been able to adapt our algorithms to several protocols for repeater chains.
More concretely, we have studied the \mbox{\genswaponly} protocol and two generalizations. The first one includes distillation \mbox{\ddistillation}, the second one takes into account the communication time in the swap operations.
We believe that the tools we have developed here can be extended to several other protocols without losing the polynomial runtime.
Some examples which we leave for further work include tracking the full density matrix, variations of \mbox{\ddistillation} with unequal spacing of the nodes or with a number of segments different than a power of two, and the investigation of more general network topologies.
Inspired by hardware, it would also be interesting to model decaying memory efficiency and nodes that can not generate entanglement concurrently with both adjacent neighbors.

In summary, we have proposed two efficient algorithms to characterize the behavior of repeater chain protocols.
We expect our algorithms to find use in the study and analysis of future quantum networks.
Moreover, the existence of protocols capable of efficiently characterizing the state produced opens the door to real-time decision taking at the nodes based on this knowledge.

\appendices
\section{Distillation on Werner states \label{app:distillation}}
In this appendix, we find the state after successful entanglement distillation on two Werner states.
Performing entanglement distillation on two Werner states with Bell-state fidelities $F_A$ and $F_B$ yields a state with Bell-state fidelity \cite{duer2007entanglement}
\begin{equation}
\label{eq:Fdist}
	\frac{
	\left(F_A F_B + \frac{1}{9}\bar{F}_A \bar{F}_B\right)
	}{
p_{\textnormal{dist}}
}
\end{equation}
where the probability of success $p_{\textnormal{dist}}$ is given by
\begin{equation}
\label{eq:pdistF}
	F_A F_B + \frac{1}{3} F_A \bar{F}_B + \frac{1}{3} \bar{F}_A F_B + \frac{5}{9}\bar{F}_A\bar{F}_B
\end{equation}
where we have denoted $\bar{F} = 1 - F$.
Although the output state is not a Werner state, it is always possible to transform it into a Werner state with the same Bell-state fidelity by local operations.
We rewrite eqs.~\eqref{eq:Fdist} and \eqref{eq:pdistF} as function of the Werner parameters $\werner_A$ and $\werner_B$ of the input states rather than their fidelities $F_A$ and $F_B$ using eq.~\eqref{eq:fidelity-werner-states}, which yields eqs.~\eqref{eq:werner-distillation} and \eqref{eq:pdist}.

\section{Computation of $\Tupper_n$ \label{app:bounds}}

In this appendix, we first prove proposition~\ref{prop:mean-computation} and subsequently show how a modified version of the deterministic algorithm from section~\ref{sec:deterministic} computes the empirical mean $\meanpartial{\Tupper_n}{\ttrunc}$ from eq.~\eqref{eq:computed-mean}.

\subsection{Proof of proposition~\ref{prop:mean-computation} \label{app:dominance}}

The random variable $\Tupper_n$ for $n \in \{0, 1, 2, \dots\}$ is recursively defined as
\begin{eqnarray}
	\Tupper_0 &=& T_0\\
	\label{eq:Tupper0}
	\Mupper_{n+1} &=&
	\left(\Tupper_{n}\right)^{(A)}
	+
	\left(\Tupper_{n}\right)^{(B)}
	\label{eq:Mupper}
	\\
	\Tupper_{n+1} &=& \sum_{k=1}^{K_n} \left(\Mupper_n\right)^{(k)}
	\label{eq:Tupperplus}
\end{eqnarray}
where $T_0$ is defined in section~\ref{sec:Tn} and $K_n$ is geometrically distributed with parameter $\pgen$ for all $n$.

For random variables $X$ and $Y$, both defined on a subset $\mathcal{D}$ of the nonnegative integers, we say that the random variable $Y$ stochastically dominates the random variable $X$, denoted by $X \stoch Y$, if \mbox{$\Pr(X \leq x) \geq \Pr(Y \leq x)$} for all $x \in \mathcal{D}$.
We prove that $\Tupper_n$ stochastically dominates $T_n$ for all $n \geq 0$, for which we need the following lemma.
\begin{lemma}
\label{lemma:dominance-general}
	Let $X, Y, A$ and $B$ each be discrete random variables taking values in the nonnegative integers and let $X'$ ($Y'$) denote an i.i.d. copy of $X$ ($Y$).
Then the following hold:
	\begin{enumerate}[(a)]
		\item \label{lemma:dominance-max-sum} If $X \stoch Y$, then \mbox{$\max\left\{X, X'\right\} \stoch Y + Y'$}.
	\item \label{lemma:axay} If $X \stoch Y$, then $A + X \stoch A + Y$.
	\item \label{lemma:axby} If $X \stoch Y$ and $A \stoch B$, then $A + X \stoch B + Y$.
	\item \label{lemma:dominance-sums} If $m \in \{1, 2, \dots \}$ and $X \stoch Y$, then \mbox{$\sum_{j=1}^m X^{(j)} \stoch \sum_{j=1}^m Y^{(j)}$}.
	\item \label{lemma:dominance-geometric-sum} If $K$ and $K'$ are i.i.d. geometric random variables with parameter $p$ and $X \stoch Y$, then \mbox{$\sum_{j=1}^K X^{(j)} \stoch \sum_{j=1}^{K'} Y^{(j)}$}
\end{enumerate}
where we use the notation $X^{(.)}$ to denote an i.i.d. copy of $X$, following sec.~\ref{sec:notation}.
\end{lemma}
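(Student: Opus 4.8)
The plan is to prove each of (a)--(e) directly from the definition of stochastic dominance in terms of cumulative distribution functions, writing $F_X(t) := \Pr(X \leq t)$ so that $X \stoch Y$ means precisely $F_X(t) \geq F_Y(t)$ for every $t$. I would assume throughout that the ``copies'' and the auxiliary variables $A, B$ are mutually independent where needed, and I would use only two elementary facts: the CDF is nondecreasing, and for independent nonnegative-integer-valued random variables the CDF of a sum is the convolution of the individual CDFs. Transitivity of $\stoch$, which I will invoke in (c), is immediate since a chain of pointwise CDF inequalities composes.

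I would begin with (a), which contains the only nonroutine idea. Independence gives $F_{\max\{X,X'\}}(t) = F_X(t)^2$, and $X \stoch Y$ yields $F_X(t)^2 \geq F_Y(t)^2$. The key step is then $\Pr(Y + Y' \leq t) \leq F_Y(t)^2$, which I would obtain by conditioning on $Y$ and exploiting nonnegativity: $\Pr(Y + Y' \leq t) = \sum_{y=0}^{t} \Pr(Y = y)\, F_{Y'}(t - y) \leq \sum_{y=0}^{t} \Pr(Y = y)\, F_{Y'}(t) = F_Y(t)^2$, where the inequality is monotonicity of the CDF applied to $t - y \leq t$, and the final equality collects the mass of $Y$ on $\{0, 1, \dots, t\}$. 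Chaining these gives $F_{\max\{X,X'\}}(t) \geq F_{Y+Y'}(t)$ for all $t$, i.e.\ $\max\{X, X'\} \stoch Y + Y'$.

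Parts (b)--(d) I would handle by convolution and induction. For (b), conditioning on $A$ gives $F_{A+X}(t) = \sum_a \Pr(A = a)\, F_X(t - a) \geq \sum_a \Pr(A = a)\, F_Y(t - a) = F_{A+Y}(t)$, using $F_X(s) \geq F_Y(s)$ at each $s = t - a$ (both sides vanish when $t - a < 0$). Part (c) then follows from transitivity applied to $A + X \stoch A + Y \stoch B + Y$, where the first dominance is (b) and the second is (b) with $Y$ now playing the role of the fixed summand and $A \stoch B$ the dominated pair. Part (d) is an induction on $m$: the base case $m = 1$ is trivial, and the step writes $\sum_{j=1}^{m+1} X^{(j)} = \left( \sum_{j=1}^{m} X^{(j)} \right) + X^{(m+1)}$ and applies (c) to the induction hypothesis together with $X^{(m+1)} \stoch Y^{(m+1)}$.

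Finally, for (e) I would condition on the number of summands. Since $K$ and $K'$ are identically distributed, $\Pr(K = k) = \Pr(K' = k)$, and the law of total probability gives $F_{\sum_{j=1}^{K} X^{(j)}}(t) = \sum_{k=1}^{\infty} \Pr(K = k)\, F_{\sum_{j=1}^{k} X^{(j)}}(t)$, with the analogous identity for the $Y$-sum driven by $K'$. Part (d) supplies the termwise inequality $F_{\sum_{j=1}^{k} X^{(j)}}(t) \geq F_{\sum_{j=1}^{k} Y^{(j)}}(t)$ for each fixed $k$, and because the weights $\Pr(K = k)$ are nonnegative, summing preserves it. I expect the single genuinely delicate point of the lemma to be the nonnegativity-driven bound $\Pr(Y + Y' \leq t) \leq F_Y(t)^2$ in (a); everything else is routine monotonicity, convolution, induction, and transitivity. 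It is worth noting that (e) uses nothing about the geometric law beyond $K$ and $K'$ sharing a distribution, so the same argument would hold for any common mixing distribution on the number of terms.
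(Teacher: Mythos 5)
Your proposal is correct and follows essentially the same route as the paper: the identical conditioning-plus-CDF-monotonicity bound for (a), convolution for (b), transitivity for (c), induction via (c) for (d), and a nonnegative mixture over the number of summands for (e). No meaningful differences to report.
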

\begin{proof}
For statement \ref{lemma:dominance-max-sum}, we explicitly use that $Y$ only takes nonnegative values so that we can write
	\[
\Pr(Y + Y' \leq y) = \sum_{z=0}^{y} \Pr(Y \leq y - z) \Pr(Y' = z)
.
		\]
	which is, by the fact that any cumulative distribution function is monotone increasing, smaller than
	\begin{eqnarray*}
\sum_{z=0}^{y} \Pr(Y \leq y) \Pr(Y' = z)
		&=&
\Pr(Y \leq y)^2
\\
		&\leq&
\Pr(X \leq y)^2
\\
		&=&
	\Pr(\max\left\{X, X'\right\} \leq y)
	\end{eqnarray*}
	where the inequality is immediate by $X \stoch Y$.
	Statement~\ref{lemma:axay} is proven as
	\begin{eqnarray*}
	\Pr(A + X \leq z)
		&=& \sum_{a=0}^{\infty} \Pr(A=a) \Pr(X \leq z - a)
		\\
		&\geq& \sum_{a=0}^{\infty} \Pr(A=a) \Pr(Y \leq z - a)
		\\
		&=&
		\Pr(A + Y \leq z)
	\end{eqnarray*}
	and statement~\ref{lemma:axby} follows by repeated application of \ref{lemma:axay}:
	\[
		A + X \stoch
		A + Y =
		Y + A 
		\stoch
		Y + B
		= B + Y
		.
		\]
	Statement~\ref{lemma:dominance-sums} can be proven using the fact that \mbox{$\sum_{j=1}^{m} X^{(j)} = X^{(m)} + \sum_{j=1}^{m-1} X^{(j)}$} and statement~\ref{lemma:axby} by induction on $m$.
	For proving statement~\ref{lemma:dominance-geometric-sum}, first note that $\sum_{k=1}^K X^{(k)}$ where $K$ is geometrically distributed with parameter $p$ has cumulative distribution function
\[
	\Pr\left(\sum_{k=1}^K X^{(k)} \leq x\right) = p \cdot \sum_{k=1}^{\infty} (1 - p)^k \cdot \Pr(\sum_{j=1}^k X^{(j)} \leq x)
	\]
	which is a linear combination of the functions \mbox{$f^X_k : x \mapsto \Pr\left(\sum_{j=1}^k X^{(j)} \leq x\right)$}.
	Positivity of the weights \mbox{$p\cdot (1-p)^k$} together with the fact that the \mbox{$f^X_m (x) \geq f^Y_m (x)$} for all $m\in \{1, 2, \dots\}$ and all $x\in \{0, 1, \dots\}$ (see \ref{lemma:dominance-sums}) imply \ref{lemma:dominance-geometric-sum}.
\end{proof}

Using lemma~\ref{lemma:dominance-general}, it is straightforward to prove that $\Tupper_n$ stochastically dominates $T_n$.

\begin{prop}
It holds that $T_n \stoch \Tupper_n$ for all $n \geq 0$.
\end{prop}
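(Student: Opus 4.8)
The plan is to prove $T_n \stoch \Tupper_n$ by induction on $n$, driving the whole argument with lemma~\ref{lemma:dominance-general}. The recursions defining $T_n$ and $\Tupper_n$ share the same outer structure — a geometric compound sum over the number of swap attempts — and differ only in how the two lower-level links are combined: $T_n$ takes a \emph{maximum} of two copies (parallel generation) whereas $\Tupper_n$ takes a \emph{sum} (sequential generation). Since a maximum of nonnegative quantities never exceeds their sum, it is intuitively clear that $\Tupper_n$ should dominate, and the lemma packages exactly the distributional versions of this intuition that are needed to propagate the domination through both layers of the recursion.

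For the base case $n=0$ there is nothing to do: by definition $\Tupper_0 = T_0$, so trivially $T_0 \stoch \Tupper_0$. For the inductive step I would assume $T_n \stoch \Tupper_n$ and carry out two applications of the lemma. First, I apply part~\ref{lemma:dominance-max-sum} with $X = T_n$ and $Y = \Tupper_n$; the induction hypothesis supplies $X \stoch Y$, and the conclusion is
\[
	M_n = \max\left\{T_n^{(A)}, T_n^{(B)}\right\} \stoch \left(\Tupper_n\right)^{(A)} + \left(\Tupper_n\right)^{(B)} = \Mupper_{n+1}.
\]
Second, I observe that $T_{n+1}$ and $\Tupper_{n+1}$ are geometric compound sums of $M_n$ and $\Mupper_{n+1}$ respectively, over the \emph{same} geometrically distributed count with parameter $\pswap$. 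Applying part~\ref{lemma:dominance-geometric-sum} with $X = M_n$ and $Y = \Mupper_{n+1}$ then lifts the domination $M_n \stoch \Mupper_{n+1}$ just established to $T_{n+1} \stoch \Tupper_{n+1}$, which closes the induction.

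The genuine content — and the reason the lemma is indispensable rather than a one-line pointwise inequality — sits in part~\ref{lemma:dominance-max-sum}: I am not comparing a maximum and a sum of the \emph{same} variable, but a maximum of copies of $T_n$ against a sum of copies of the \emph{dominating} variable $\Tupper_n$, which is precisely what the cumulative-distribution-function monotonicity argument in the lemma's proof handles. The remaining work is bookkeeping that I would be careful with: matching indices correctly (so that the level-$n$ maximum $M_n$ is paired with $\Mupper_{n+1}$), and confirming that the outer geometric sums in both recursions use an identical parameter so that part~\ref{lemma:dominance-geometric-sum} applies verbatim. I expect no further obstacle, as the domination is preserved separately by the inner max-versus-sum step and the outer geometric-compound-sum step.
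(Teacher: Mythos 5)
Your proposal is correct and follows exactly the paper's argument: induction on $n$ with the base case $\Tupper_0 = T_0$, then lemma~\ref{lemma:dominance-general}\ref{lemma:dominance-max-sum} to pass from $T_n \stoch \Tupper_n$ to the max-versus-sum domination of the auxiliary variables, and lemma~\ref{lemma:dominance-general}\ref{lemma:dominance-geometric-sum} to lift that through the geometric compound sum. The index-matching point you flag ($M_n$ paired with $\Mupper_{n+1}$ versus $\Mupper_n$) is only a labeling discrepancy in the appendix's displayed definition; the substance of your proof is identical to the paper's.
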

\begin{proof}
We use induction on $n$.
	The base case $n=0$ is immediate since $\Tupper_0 = T_0$ (eq.~\eqref{eq:Tupper0}).
Now suppose that $T_{n} \stoch \Tupper_{n}$ for some $n \geq 0$.
	It follows directly from lemma~\ref{lemma:dominance-general}\ref{lemma:dominance-max-sum} that $M_{n} \stoch \Mupper_{n}$, where $M_n$ is given in eq.~\eqref{eq:max} and $\Mupper_n$ in eq.~\eqref{eq:Mupper}.
	Using lemma~\ref{lemma:dominance-general}\ref{lemma:dominance-geometric-sum}, we find that the dominance $M_{n} \stoch \Mupper_{n}$ implies \mbox{$T_{n + 1} \stoch \Tupper_{n + 1}$} where $T_{n+1}$ and $\Tupper_{n+1}$ are defined in eqs.~\eqref{eq:waiting-time} and \eqref{eq:Tupperplus}, respectively.
	This concludes our proof.
\end{proof}

Using this stochastic dominance on the waiting time on each nesting level, we are now ready to prove proposition~\ref{prop:mean-computation}.
First, notice that, in contrast to $T_n$, the mean of $\Tupper_n$ can be computed analytically.

\begin{lemma}
\label{lemma:mean-Tupper}
	\[
	\mean{\Tupper_n} = \left(\frac{2}{\pswap}\right)^n \cdot \frac{1}{\pgen}
	\]
\end{lemma}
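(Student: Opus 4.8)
The plan is to prove the formula by induction on $n$, extracting from the recursive definition of $\Tupper_n$ (eqs.~\eqref{eq:Tupper0}--\eqref{eq:Tupperplus}) a one-step recurrence for its mean. The structure of $\Tupper_n$ mirrors that of $T_n$ in sec.~\ref{sec:Tn} with the maximum replaced by a sum, so the only two facts I need are both elementary: linearity of expectation, to pass through the sum of two copies of $\Tupper_n$ defining $\Mupper_n$, and the mean of a geometric compound sum (Wald's identity), to pass through the outer sum defining $\Tupper_{n+1}$.

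Concretely, for the base case $n=0$ I would note that $\Tupper_0 = T_0$ is geometrically distributed with parameter $\pgen$ (sec.~\ref{sec:Tn}), hence $\mean{\Tupper_0} = 1/\pgen$, which is the claimed value at $n=0$. For the inductive step, assume $\mean{\Tupper_n} = (2/\pswap)^n \cdot 1/\pgen$. Since $\Mupper_n$ is a sum of two i.i.d.\ copies of $\Tupper_n$, linearity gives $\mean{\Mupper_n} = 2\,\mean{\Tupper_n}$. The random variable $\Tupper_{n+1}$ is then a geometric compound sum of i.i.d.\ copies of $\Mupper_n$, where the count $K_n$ is geometric with parameter $\pswap$ (exactly as in eq.~\eqref{eq:waiting-time}) and independent of the summands; conditioning on $K_n$ yields
\[
	\mean{\Tupper_{n+1}} = \mean{K_n}\cdot\mean{\Mupper_n} = \frac{1}{\pswap}\cdot 2\,\mean{\Tupper_n} = \frac{2}{\pswap}\,\mean{\Tupper_n}.
\]
Substituting the induction hypothesis gives $\mean{\Tupper_{n+1}} = (2/\pswap)^{n+1}\cdot 1/\pgen$, which closes the induction and establishes the lemma.

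There is no genuinely hard step here; the computation is a routine application of Wald's identity layered over an induction. The only points deserving a line of justification are that $\mean{K_n} = 1/\pswap$ (the count of swap attempts is geometric with parameter $\pswap$), and the compound-sum identity itself, which follows from
\[
	\mean{\sum_{k=1}^{K}X^{(k)}} = \sum_{k=1}^{\infty}\Pr(K=k)\,k\,\mean{X} = \mean{X}\,\mean{K},
\]
using that the copies $X^{(k)}$ are i.i.d.\ and independent of $K$, a property built into the definition of the geometric compound sum. Finiteness of every mean involved, needed to interchange expectation and summation, is guaranteed inductively by the recurrence itself, so no separate convergence argument is required.
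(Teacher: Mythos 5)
Your proof is correct and follows essentially the same route as the paper: induction on $n$, with the base case $\mean{\Tupper_0}=1/\pgen$, linearity of expectation giving $\mean{\Mupper_n}=2\,\mean{\Tupper_n}$, and Wald's identity for the geometric compound sum closing the recurrence. The only difference is that you spell out the elementary derivation of the compound-sum identity, which the paper simply cites.
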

\begin{proof}
We use induction on $n$.
	Since $\Tupper_0$ equals $T_0$, which is geometrically distributed with parameter $\pgen$, we have $\mean{\Tupper_0} = 1 / \pgen$.
	For the induction step, first note that by linearity of the mean, we have
	\[
		\mean{\Mupper_{n}} = \mean{\left(\Tupper_n\right)^{(A)}} + \mean{\left(\Tupper_n\right)^{(B)}} = 2\mean{\Tupper_{n}}.
	\]
	The last step is given by Wald's identity \cite{wald1947sequential}, which states that the mean of a compound sum equals the product of the mean of the summand and the random variable that is the summation upper bound:
	\[
		\mean{\Tupper_{n + 1}} = \mean{K_{n}} \cdot \mean{\Mupper_{n}} = \frac{1}{\pswap} \cdot 2\mean{\Tupper_{n}}.
		\]
	Closing the recursion relation on $\mean{\Tupper_{n}}$ yields the expression in the lemma.
\end{proof}

The following lemma, which states that stochastic domination implies domination of the empirical mean from eq.~\eqref{eq:computed-mean} provides the last step in proving proposition~\ref{prop:mean-computation}.

\begin{lemma}
\label{lemma:meanbound}
	Let $X$ and $Y$ be discrete random variables, both defined on the nonnegative integers.
	If $X \stoch Y$, then
	\[
		0 \leq \mean{X} - \meanpartial{X}{\ttrunc} \leq \mean{Y} - \meanpartial{Y}{\ttrunc}
		\]
	for each $\ttrunc \in \{0, 1, \dots\}$.
	In particular, for $\ttrunc = 0$ it follows that \mbox{$\mean{X} \leq \mean{Y}$}.
\end{lemma}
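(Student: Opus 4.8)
The plan is to reduce everything to the tail-sum representation of the mean recalled in sec.~\ref{sec:notation}, namely $\mean{X}=\sum_{t=1}^{\infty}\Pr(X\geq t)$. Since by definition $\meanpartial{X}{\ttrunc}=\sum_{t=1}^{\ttrunc}\Pr(X\geq t)$, subtracting the two sums immediately gives the tail remainder
\[
	\mean{X}-\meanpartial{X}{\ttrunc}=\sum_{t=\ttrunc+1}^{\infty}\Pr(X\geq t),
\]
and the same identity with $Y$ in place of $X$. The left inequality $0\leq\mean{X}-\meanpartial{X}{\ttrunc}$ is then immediate, because every term $\Pr(X\geq t)$ is a probability and hence nonnegative.

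For the right inequality I would first restate the hypothesis $X\stoch Y$ in tail form. By definition it reads $\Pr(X\leq x)\geq\Pr(Y\leq x)$ for all nonnegative integers $x$; taking complements and using $\Pr(X\geq t)=1-\Pr(X\leq t-1)$ converts this into $\Pr(X\geq t)\leq\Pr(Y\geq t)$ for every $t\geq 1$. The only point needing care here is the index shift in this conversion, so I would spell it out explicitly. With the tail form in hand, the comparison is termwise, so
\[
	\sum_{t=\ttrunc+1}^{\infty}\Pr(X\geq t)\leq\sum_{t=\ttrunc+1}^{\infty}\Pr(Y\geq t)=\mean{Y}-\meanpartial{Y}{\ttrunc},
\]
which is exactly the claimed bound. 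The statement $\ttrunc\to\infty$-coincidence is not needed here, but the same tail expression makes clear that both remainders vanish in that limit.

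Finally, for the special case $\ttrunc=0$ I would note that $\meanpartial{X}{0}$ and $\meanpartial{Y}{0}$ are empty sums, hence zero, so the displayed chain collapses to $0\leq\mean{X}\leq\mean{Y}$, giving the last assertion of the lemma. I do not expect a genuine obstacle in this proof: the only subtlety is the careful complementation from the cumulative form $\Pr(X\leq x)$ to the tail form $\Pr(X\geq t)$, together with the remark that the argument remains valid termwise even when the means are infinite, in which case the inequalities should be read in $[0,\infty]$.
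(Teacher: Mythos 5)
Your proposal is correct and follows essentially the same route as the paper: both rewrite the difference of means as the tail sum $\sum_{t=\ttrunc+1}^{\infty}\Pr(X\geq t)$, get the lower bound from nonnegativity, and get the upper bound from the termwise tail comparison $\Pr(X\geq t)\leq\Pr(Y\geq t)$ implied by $X\stoch Y$. The only difference is that you spell out the complementation from the cumulative form to the tail form, which the paper treats as immediate from the definition of stochastic dominance.
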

\begin{proof}
	The lower bound is an immediate consequence of positivity of probabilities and
	\begin{equation}
		\label{eq:difference}
		\mean{X} - \meanpartial{X}{\ttrunc}=	\sum_{t = \ttrunc + 1}^{\infty} \Pr(X \geq t)
		.
	\end{equation}
	The upper bound follows from eq.~\eqref{eq:difference} and the definition of stochastic dominance: \mbox{$\Pr(X \geq t) \leq \Pr(Y \geq t)$} for all \mbox{$t\in \{0,1,\dots\}$}.
\end{proof}

Proposition~\ref{prop:mean-computation} follows from lemma~\ref{lemma:meanbound} by replacing $X$ by $T_n$ and $Y$ by $\Tupper_n$, and substituting $\mean{\Tupper_n}$ by the expression in lemma~\ref{lemma:mean-Tupper}.

\subsection{Computing the empirical mean of $\Tupper_n$ \label{app:compute-Tuppern}}
Here, we outline how the deterministic algorithm computes $\meanpartial{\Tupper_n}{\ttrunc}$, which is needed for determining a bound on the mean of $T_n$ using proposition~\ref{prop:mean-computation}.
First note that $\Tupper_n$ and $T_n$ are identical except for the difference between $M_n$ in eq.~\eqref{eq:max}, which equals the maximum of two copies of the waiting time, and $\Mupper_n$ in eq.~\eqref{eq:Mupper}, which equals their sum.
We modify the algorithm to account for this difference by replacing the computation of the probability distribution of $M_n$ in eq.~\eqref{eq:max-step-computation} by the convolution
\[
	\Pr(\Mupper_n = t) = \sum_{j=0}^t \Pr(\Tupper_{n-1} = j) \Pr(\Tupper_{n-1} = t - j)
	.
	\]
In order to determine $\meanpartial{\Tupper_n}{\ttrunc}$, we first run the modified algorithm to compute the cumulative probability distribution $\Pr(\Tupper_n \leq t)$ for $t\in \{0, 1, \dots, \ttrunc\}$, after which we calculate
\[
	\meanpartial{\Tupper_n}{\ttrunc} = \sum_{t=1}^{\ttrunc} \left[ 1 - \Pr(\Tupper_n \leq t - 1)\right]
	.
	\]

\section*{Acknowledgment}
The authors would like to thank Kenneth Goodenough, Alfons Laarman, Filip Rozp\k{e}dek, Joshua Slater and Stephanie Wehner for helpful discussions.
This work was supported by the QIA project (funded by European Union's Horizon 2020, Grant Agreement No. 820445) and by the Netherlands Organization for Scientific Research (NWO/OCW), as part of the Quantum Software Consortium program (project number 024.003.037 / 3368).

\ifCLASSOPTIONcaptionsoff
  \newpage
\fi

\bibliographystyle{IEEEtran}
\bibliography{allbibs}
\end{document}